\newtheorem{theorem}{Theorem}[section]
\newtheorem{lemma}[theorem]{Lemma}
\newtheorem{definition}[theorem]{Definition}
\newtheorem{claim}{Claim}
\begin{document}

\title{Simpler and faster algorithms for detours in planar digraphs
\thanks{This work is a part of project BOBR (KM, MP, MS) that has received funding from the European Research Council (ERC) under the European Union’s Horizon 2020 research and innovation programme (grant agreement No. 948057).
M.~Hatzel was supported by the Federal Ministry of Education and Research (BMBF) and by a fellowship within the IFI programme of the German Academic Exchange Service (DAAD).}
}

\author{
Meike Hatzel\thanks{National Institute of Informatics, Tokyo, Japan (\href{mailto:meikehatzel@nii.ac.jp}{\texttt{meikehatzel@nii.ac.jp}})}
\and Konrad Majewski\thanks{Institute of Informatics, University of Warsaw, Poland (\href{mailto:k.majewski@mimuw.edu.pl}{\texttt{k.majewski@mimuw.edu.pl}})}
\and Michał Pilipczuk\thanks{Institute of Informatics, University of Warsaw, Poland (\href{mailto:michal.pilipczuk@mimuw.edu.pl}{\texttt{michal.pilipczuk@mimuw.edu.pl}})}
\and Marek Sokołowski\thanks{Institute of Informatics, University of Warsaw, Poland (\href{mailto:marek.sokolowski@mimuw.edu.pl}{\texttt{marek.sokolowski@mimuw.edu.pl}})}
}
\date{}

\maketitle
\thispagestyle{empty}

\begin{abstract}
In the \ddetour problem one is given a digraph $G$ and a pair of vertices $s$ and~$t$, and the task is to decide whether there is a directed simple path from $s$ to $t$ in $G$ whose length is larger than $\dist{G}{s}{t}$.
The more general parameterized variant, \dldetour, asks for a simple $s$-to-$t$ path of length at least $\dist{G}{s}{t}+k$, for a given parameter $k$.
Surprisingly, it is still unknown whether \ddetour is polynomial-time solvable on general digraphs.
However, for planar digraphs, Wu and Wang~[Networks, '15] proposed an $\Oh(n^3)$-time algorithm for \ddetour, while Fomin et al.~[STACS 2022] gave a $2^{\Oh(k)}\cdot n^{\Oh(1)}$-time fpt algorithm for \dldetour.
The algorithm of Wu and Wang relies on a nontrivial analysis of how short detours may look like in a plane embedding, while the algorithm of Fomin et al.~is based on a reduction to the \dpaths{3} problem on planar digraphs.
This latter problem is solvable in polynomial time using the algebraic machinery of Schrijver~[SIAM~J.~Comp.,~'94], but the degree of the obtained polynomial factor is huge.

In this paper we propose two simple algorithms: we show how to solve, in planar digraphs, \ddetour in time $\Oh(n^2)$ and \dldetour in time $2^{\Oh(k)}\cdot n^4 \log n$.
In both cases, the idea is to reduce to the \dpaths{2} problem in a planar digraph, and to observe that the obtained instances of this problem have a certain topological structure that makes them amenable to a direct greedy strategy.
\end{abstract}

 \begin{textblock}{20}(-1.9, 3.1)
   \includegraphics[width=40px]{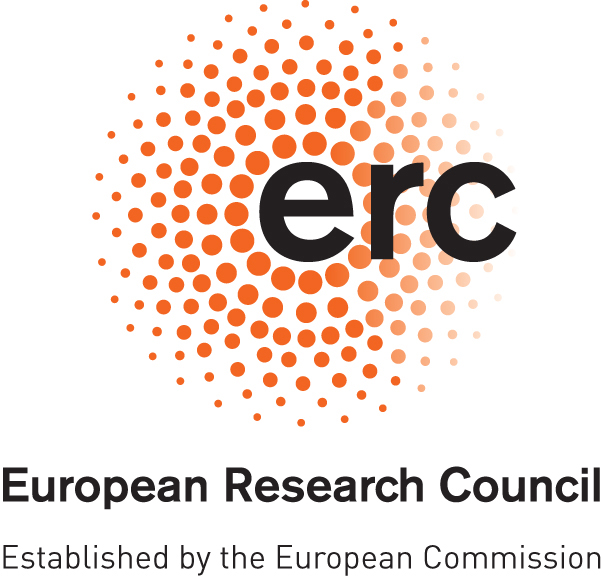}%
 \end{textblock}
 \begin{textblock}{20}(-2.15, 3.5)
   \includegraphics[width=60px]{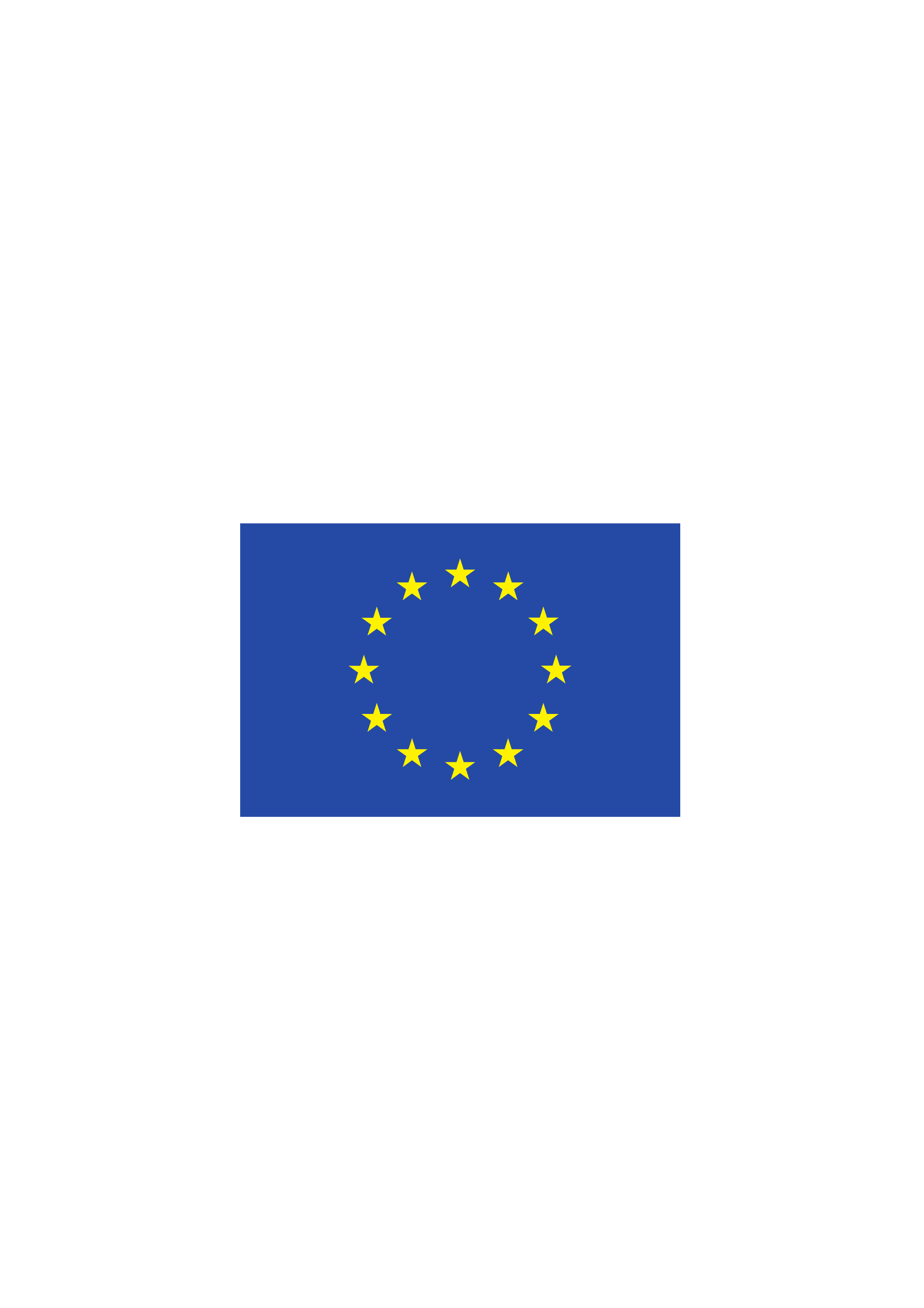}%
 \end{textblock}
 \begin{textblock}{20}(-1.8, 4.8)
	\includegraphics[width=40px]{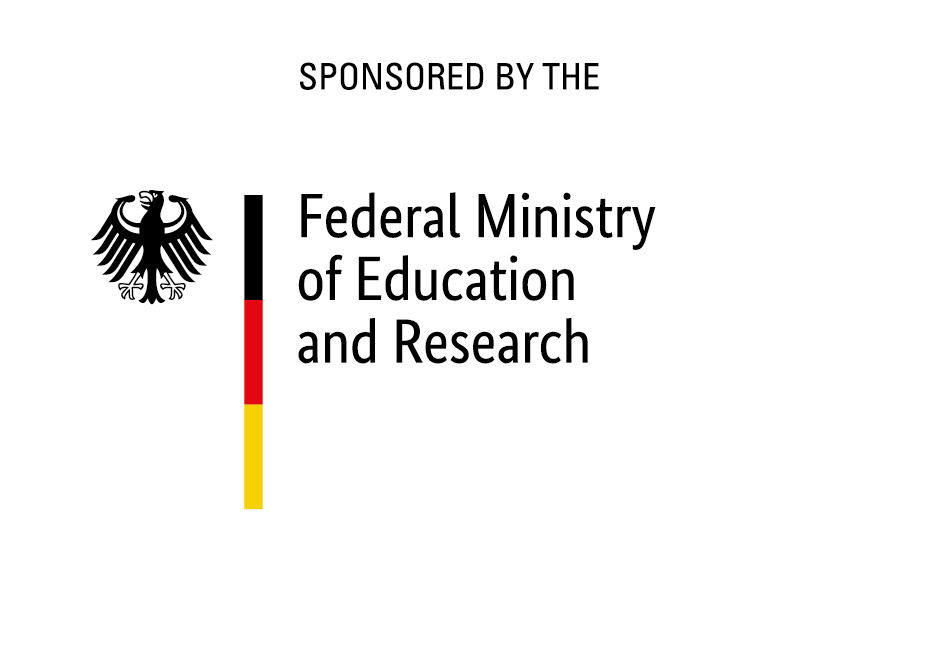}%
\end{textblock}

\newpage
\pagenumbering{arabic}

\section{Introduction}\label{sec:intro}

The complexity status of \ddetour is arguably one of the most tantalizing open questions within the area of graph algorithms. The problem asks to decide, for a given digraph $G$ and two terminals $s$ and $t$, whether there is a simple path in $G$ from $s$ to $t$ that is not the shortest --- has length strictly larger than $\dist{G}{s}{t}$. It is still unknown whether this problem is polynomial-time solvable on general digraphs. See the work of Fomin et al.~\cite{DetoursDirected} for a discussion of relevant literature.

Given this state of the affairs, it is interesting to study \ddetour on restricted classes of digraphs, in hope for finding more positive results or useful insight. In this vein, a particularly well-motivated idea is to consider the class of planar digraphs. The reason for this is that in planar digraphs, the \dpaths{k} problem --- decide the existence of disjoint directed paths linking given $k$ pairs of terminals --- is polynomial-time solvable for every fixed~$k$~\cite{Schrijver94}, and even fixed-parameter tractable when parameterized by $k$~\cite{CyganMPP13}. This is not the case in general digraphs, where the problem is $\NP$-hard already for $k=2$~\cite{FortuneHW80}. This can be used for \ddetour. Namely, Fomin et al.~\cite{DetoursDirected} showed that the more general parameterized version of the problem --- \dldetour, where we look for a simple $s$-to-$t$ path of length at least $\dist{G}{s}{t}+k$, for a given parameter $k$ --- can be reduced (with a $2^{\Oh(k)}\cdot n^{\Oh(1)}$ multiplicative overhead in the complexity) to \dpaths{3}, which can be solved in polynomial time in planar digraph using the algorithm of Schrijver~\cite{Schrijver94}. This of course applies also to the basic \ddetour problem by setting $k=1$, but even earlier Wu and Wang~\cite{WuW15} gave a direct $\Oh(n^3)$-time algorithm for this case.

While the reduction of Fomin et al.~\cite{DetoursDirected} is actually quite simple, the algorithm of Schrijver~\cite{Schrijver94} for \dpaths{3} in planar digraphs is not, as it relies on an involved algebraic framework. In particular, the degree of the polynomial bounding the running time is at least a two-digit number. On the other hand, the cubic algorithm of Wu and Wang~\cite{WuW15} is also quite complicated and relies on an analysis of different planar configurations that may occur.

In this work we propose two simple algorithms: one for \ddetour and one for \dldetour, both in planar digraphs. These are summarized below.

\begin{theorem}\label{thm:main-ddetour}
 \ddetour in planar digraphs can be solved in time $\Oh(n^2)$.
\end{theorem}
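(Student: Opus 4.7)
My plan is to reduce \ddetour on planar digraphs to a family of specially structured instances of the $2$-disjoint paths problem, each of which can then be resolved in linear time via a topological routing argument. First, compute the distances $\dist{G}{s}{v}$ for all $v \in V(G)$ by BFS in $\Oh(n)$ time, and call an edge $(a,b) \in E(G)$ \emph{non-increasing} if $\dist{G}{s}{b} \le \dist{G}{s}{a}$. I claim that $(G,s,t)$ is a yes-instance of \ddetour if and only if there exist a non-increasing edge $(a,b) \in E(G)$ and vertex-disjoint directed paths $P_1 \colon s \to a$ and $P_2 \colon b \to t$ in $G$. For the backward direction, the concatenation $P_1 \cdot (a,b) \cdot P_2$ is a simple $s$-to-$t$ path of length
\[
|P_1| + 1 + |P_2| \;\ge\; \dist{G}{s}{a} + 1 + \dist{G}{b}{t} \;\ge\; \dist{G}{s}{b} + 1 + \dist{G}{b}{t} \;\ge\; \dist{G}{s}{t} + 1,
\]
where the second inequality uses non-increasingness and the third is the triangle inequality. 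For the forward direction, if $P$ is any simple $s$-to-$t$ path with $|P| > \dist{G}{s}{t}$, then telescoping the edge-wise inequalities $\dist{G}{s}{v} - \dist{G}{s}{u} \le 1$ along $P$ yields $\dist{G}{s}{t} \le |P|$ with strict gap, so by averaging some edge $(a,b) \in P$ must be non-increasing; splitting $P$ at that edge produces the required vertex-disjoint pair.

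Next I would exploit the very restricted shape of these $2$-disjoint paths instances: the outer terminals are always $s$ and $t$, while the inner pair $(a,b)$ is joined by a single edge whose endpoints lie on distance layers that do not increase. Fixing a shortest-path tree $T$ rooted at $s$ together with a planar embedding of $G$, I would argue that any valid pair $(P_1, P_2)$ admits a canonical form in which $P_1$ follows $T$ as far as possible, leaving $P_2$ to traverse the ``backward'' region cut out of the embedding by $T$ and the non-increasing edge $(a,b)$. An uncrossing argument should then reduce the existence question for a fixed $(a,b)$ to a pure reachability test in an auxiliary planar acyclic graph obtained by orienting the edges according to their distance layers from $s$; this test can be answered in $\Oh(n)$ time per candidate edge after linear-time preprocessing.

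Since $G$ is planar it has $\Oh(n)$ edges, so there are at most $\Oh(n)$ non-increasing candidates, and iterating the $\Oh(n)$-time check across them gives the claimed total time of $\Oh(n^2)$. The main obstacle I anticipate is the middle step: proving an uncrossing lemma strong enough to reduce each $2$-disjoint paths query to a plain reachability check, uniformly across all non-increasing edges, rather than resorting to a general planar $k$-disjoint paths routine. This is precisely where the planar embedding and the consistency of the BFS layers from $s$ must be combined in a non-trivial way in order to avoid invoking Schrijver's heavy algebraic machinery.
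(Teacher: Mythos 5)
Your first step is correct: the equivalence between a detour and the existence of a non-increasing edge $(a,b)$ together with vertex-disjoint paths $s\to a$ and $b\to t$ is sound, and it is close in spirit to the decomposition the paper uses. The genuine gap is exactly the ``middle step'' you flag yourself. As formulated, each candidate edge $(a,b)$ gives a general four-terminal \dpaths{2} instance in a planar digraph, with $a$ and $b$ possibly deep in the interior of the embedding and no relation between the terminals and any face. For such instances no plain reachability test is known to suffice --- that is precisely why one would otherwise fall back on Schrijver's machinery --- and your proposed fix does not work: the BFS tree $T$ spans the whole graph, so ``the region cut out of the embedding by $T$ and the edge $(a,b)$'' is not a well-defined region, an optimal $P_1$ need not follow $T$ at all, and with all four terminals interior there is no left/right order on candidate paths that would support an uncrossing/exchange argument.

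The paper obtains the needed structure by \emph{not} splitting at an arbitrary non-increasing edge. It takes the detour $P$, lets $p$ be the \emph{first} BFS layer containing two vertices $x,y$ of $P$, and deletes all layers below $p$. Two things then happen: the prefix $P[s\to x]$ is a shortest path living entirely in the deleted layers, so only the two pieces inside $G_{\geq p}$ must be kept disjoint; and all of $L_p$ lies on a single face of $G_{\geq p}$ (the paper's Claim~\ref{claim:one-face}), which can be chosen as the outer face. After adding a super-vertex $\xsuper$ adjacent to $L_p\setminus\{y\}$ (this also removes the need to enumerate $x$, giving $n$ subproblems rather than one per edge), the path from $\xsuper$ to $y$ is grounded on the outer face, hence splits the plane into $\leftarea{\cdot}$ and $\rightarea{\cdot}$; an exchange argument then shows one may replace it by the leftmost or rightmost $x$-to-$y$ path, after which the second path is found by a single reachability check in $\Oh(n)$ time. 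Your accounting ($\Oh(n)$ candidates times $\Oh(n)$ per check) would be fine if the per-candidate check were valid, so the budget is not the problem; the missing content is the structural guarantee (terminals of one path on a common, simple outer face) that makes the greedy leftmost/rightmost replacement correct, and your proposal does not supply a substitute for it.
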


\begin{theorem}\label{thm:main-dldetour}
 \dldetour in planar digraphs can be solved in time $2^{\Oh(k)}\cdot n^4$ by a Monte Carlo algorithm, or deterministically in time $2^{\Oh(k)}\cdot n^4 \log n$.
\end{theorem}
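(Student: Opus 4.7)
The plan is to follow the general route of Fomin et al.~\cite{DetoursDirected}, but reduce to \dpaths{2} rather than \dpaths{3} in planar digraphs, and then exploit the topological structure of the resulting instances instead of invoking Schrijver's algebraic machinery. Fix a shortest $s$-to-$t$ path $P_0$. A simple $s$-to-$t$ path of length $\dist{G}{s}{t}+k$ can be viewed, after possibly rerouting along $P_0$, as a ``spine'' piece of $P_0$ together with a detour that uses $\Oh(k)$ vertices outside $P_0$. By colour coding on these $\Oh(k)$ off-spine vertices --- randomly in $2^{\Oh(k)}$ trials, or deterministically with a $\log n$ overhead via $(n,k)$-perfect hash families --- we may assume that the off-spine vertices are marked by a designated colour class, while the spine vertices are marked differently.

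Once the colouring is fixed and we guess the $\Oh(1)$ entry and exit vertices where the detour leaves and rejoins $P_0$, the task reduces to \dpaths{2} on a suitable planar digraph derived from $G$: one pair of terminals must be connected by a path along the spine between the guessed branching vertices, and the other by a vertex-disjoint path routed through the pocket of $\Oh(k)$ coloured off-spine vertices. The crucial point is that this \dpaths{2} instance inherits strong topological constraints from the embedding: one of the two required paths is essentially forced onto (a subpath of) the shortest-path skeleton, and the other lies on a prescribed side of that skeleton in the plane. As in \cref{thm:main-ddetour}, this should allow one to solve the \dpaths{2} instance by a direct leftmost-path/greedy routine on the planar embedding, rather than by any general-purpose planar \dpaths{k} algorithm.

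Putting everything together, one runs $n^{\Oh(1)}\cdot 2^{\Oh(k)}$ guesses of branching vertices and colourings (with an extra $\log n$ factor in the deterministic version), each followed by a planar \dpaths{2} computation of cost $\Oh(n^2)$ via the engine underlying \cref{thm:main-ddetour}; balancing the arithmetic gives the announced $2^{\Oh(k)}\cdot n^4$ and $2^{\Oh(k)}\cdot n^4\log n$ bounds. The main obstacle I foresee is the structural step: one must argue that after colour coding and the right choice of branching vertices, the detour can be assumed to lie entirely in one of the two face-regions of the plane delimited by the spine, so that the greedy planar \dpaths{2} routine applies. Without such a one-sidedness guarantee, configurations in which the detour wraps around the spine on both sides would have to be handled separately, potentially forcing a reduction to \dpaths{3} and thus negating the whole gain over \cite{DetoursDirected}.
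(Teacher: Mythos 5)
There is a genuine gap, and it sits exactly at the premise your whole plan rests on. A witness path of length $\dist{G}{s}{t}+k$ cannot in general be assumed, even ``after rerouting along $P_0$,'' to use only $\Oh(k)$ vertices outside a fixed shortest path $P_0$: what is bounded by $k$ is the number of non-forward steps with respect to the BFS layering from $s$, not the number of off-spine vertices, and the witness may be disjoint from $P_0$ except at $s$ and $t$ while visiting $\Theta(n)$ vertices of its own. Consequently, colour coding ``the $\Oh(k)$ off-spine vertices'' and guessing $\Oh(1)$ entry/exit points on $P_0$ is not a sound reduction, and the resulting \dpaths{2} instance in which ``one path is forced onto the spine'' never materializes. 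A second, related gap: even granting a planar \dpaths{2} instance, the greedy leftmost-path routine of Theorem~\ref{thm:main-ddetour} only decides existence of two disjoint paths; it does not enforce the lower bound on the length of the detour piece, and your proposal says nothing about how length at least $k$ is certified.

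The paper's route is different in both respects. It first runs the exact-detour algorithm of Bez\'akov\'a et al.\ to exclude solutions of length $\dist{G}{s}{t}+l$ for $l\in\{k,\dots,3k-1\}$; it then guesses a BFS layer $L_p$ and the first two visits $x,y\in L_p$ of the witness (both provably on one face of the truncated graph $G_{\geq p}$), so that the sought solution decomposes into a shortest $s$-to-$x$ prefix, an $(x,y)$-grounded middle path of length at least $2k$, and a $y$-to-$t$ path disjoint from it. The exclusion of lengths in $[k,3k-1]$ is what lets one replace the middle path by a topologically extremal (leftmost or rightmost) long path, and this extremality --- not a spine --- is what makes greed work: colour coding is applied to the first $k$ and last $k$ vertices of the middle path (green/blue), the green prefix is recovered as the lexicographically leftmost monochromatic path to a guessed midpoint $x'$, the plane graph is cut along it, and the rest of the middle path is the leftmost $x'$-to-$y$ path avoiding the cut vertices; a final reachability check finds the $y$-to-$t$ path. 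The one-sidedness issue you flag at the end is real but is resolved symmetrically, not by escalating to \dpaths{3}: depending on whether the $y$-to-$t$ path lies in the right or left area of the middle path, one uses leftmost or rightmost paths, and both cases are simply tried. The budget is then $n^2$ guesses of $(p,x,y)$ times $2^{\Oh(k)}\cdot n^2$ (colourings times the guess of $x'$ with linear work each), giving $2^{\Oh(k)}\cdot n^4$, with the $\log n$ overhead coming from universal sets in the derandomized version. As it stands, your proposal would need a correct replacement for the false $\Oh(k)$-off-spine claim and a mechanism to certify the length of the detour piece before the complexity claims can be assessed.
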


The main idea in the proof of Theorem~\ref{thm:main-ddetour} is to perform a reduction to the \dpaths{2} problem, roughly similarly as in the work of Fomin et al.~\cite{DetoursDirected}. However, we observe that if this reduction is performed carefully, then one essentially obtains an instance of \dpaths{2} where three out of four terminals lie on one face, say the outer face. Such instances can be solved very easily: one of the paths --- the one with both terminals on the outer face --- can be chosen greedily so that it leaves the maximum possible space for the other path. Then the other path can be found by a simple reachability check.

For Theorem~\ref{thm:main-dldetour}, as in Fomin et al.~\cite{DetoursDirected}, we use the result of Bez\'akov\'a et al.~\cite{BezakovaCDF19} that the \exdldetour{} problem --- finding a shortest $s$-to-$t$ path of length {\em{exactly}} $\dist{G}{s}{t}+k$ --- can be solved in time $2^{\Oh(k)}\cdot n^2$,
even on general digraphs. This allows us to assume, when solving \dldetour, that there are no $s$-to-$t$ paths of length between $\dist{G}{s}{t}+k$ and $\dist{G}{s}{t}+3k$. Having this assumption, the proof of Theorem~\ref{thm:main-dldetour} proceeds by expanding the basic idea behind Theorem~\ref{thm:main-ddetour} with color coding.

We believe that compared to~\cite{DetoursDirected,WuW15}, our algorithms provide a simpler explanation for the tractability of \dlldetour in planar digraphs. While we do not expect that the gained insight will be directly applicable to the case of general digraphs, we hope that it might be a better starting point for generalizations to less restrictive topological graph classes, for instance to digraphs of bounded genus or digraphs whose underlying undirected graphs exclude a fixed minor.

\section{Preliminaries}\label{sec:prelim}

\paragraph*{Graphs.}
In this paper we consider planar directed graphs $G=(V(G),E(G))$.
For the purposes of our problem, we can assume that the input graphs are simple, that is, they do not have self-loops or multiple arcs connecting two vertices in the same direction.
Moreover, we assume that $G$ is weakly connected, that is, the underlying undirected graph is connected.
For convenience, we set $n = |V(G)|$ and $m = |E(G)|$.
Since $G$ is planar and simple, we have that $m \in \Oh(n)$.

Given an~arc $e = uv \in E(G)$, we say that $u$ is the \emph{tail} of $e$ and $v$ is the \emph{head} of $e$.
Here, we consider $e$ incident to both $u$ and $v$.
A~sequence $v_1 v_2 \dots v_k$ of vertices is called a~\emph{walk} in $G$ if for each $i \in \{1, \dots, k-1\}$, $v_iv_{i+1}$ is an~arc in $G$.
The vertex $v_1$ is called the \emph{origin} of the walk, while $v_k$ is its \emph{destination}.
If the vertices of a~walk are pairwise different, the walk is a~\emph{path}.
Given two walks $W_1$, $W_2$, if the destination of $W_1$ coincides with the origin of $W_2$, then we define $W_1 \circ W_2$ as the concatenation of both walks.
The \emph{length} of a~path $P$, denoted $\length{P}$, is the number of arcs it contains.
Given a path $P$ and two of its vertices $x$ and $y$, we denote by $P[x \to y]$ the subpath of $P$ which starts at $x$, goes along the arcs of $P$, and ends in $y$.

Given a~pair of vertices $u, v \in V(G)$, the \emph{distance} from $u$ to $v$ in $G$ is the length of the shortest $u$-to-$v$-path in $G$ and is denoted by $\dist{G}{u}{v}$.
If there is no directed path from $u$ to $v$ in $G$, we put $\dist{G}{u}{v} = +\infty$.
If the graph $G$ is known from the context, we may omit $G$ from the notation and simply write $\dist{}{u}{v}$.

\paragraph*{Plane embeddings.}
Given a~planar graph $G$, its embedding into the plane is a~mapping of the vertices of $G$ to pairwise distinct points in the plane and the arcs of $G$ to plane curves, so that:
\begin{itemize}
  \item each arc $uv \in E(G)$ is mapped to a~plane curve whose endpoints coincide with the images of $u$ and $v$, and such that no other vertex of $G$ is mapped to a point on the curve; and
  \item the images of the edges of $G$ are pairwise internally disjoint.
\end{itemize}
A~\emph{plane graph} is a~planar graph together with a~fixed embedding of the graph into the plane.
Such an~embedding splits the plane in a~number of regions, called \emph{faces}.
One of the faces is unbounded and called the \emph{outer face}.
Given a~face $F$, its boundary $\bnd F$ can be described as a~cyclic sequence of arcs bounding $F$ (assuming $G$ is connected).
If $\bnd F$ is isomorphic to a~simple polygon (ignoring the orientations of the arcs), we say that $F$ is \emph{simple}.

For algorithmic purposes, we represent planar embeddings combinatorially: each vertex $v$ of the graph stores the anti-clockwise ordering of all arcs of $G$ incident to $v$.
Given a~planar graph $G$, its combinatorial embedding can be computed in time linear with respect to the size of the graph~\cite{DBLP:journals/jacm/HopcroftT74}.
Note that the combinatorial embedding uniquely determines the faces of the planar graph, however, it does not designate the outer face of the embedding. Hence, in our algorithms, we may elect any face to be the outer face.

\paragraph*{Comparing paths in plane graphs.}
Let $G$ be a~plane graph and $F$ be its outer face.
Assume that $F$ is simple.
We say that a~path $P$ in $G$ is \emph{$v$-grounded (with respect to $F$)} if its origin $v$ is a~vertex of $F$.
Moreover, we say that $P$ is \emph{$(v,w)$-grounded (with respect to $F$)} if both its origin $v$ and its destination $w$ are vertices of $F$.
We now present two ways to compare (doubly) grounded paths in~$G$.
We note that the following definitions are standard.

\begin{definition}
  Assume that $P_1$ and $P_2$ are two different $v$-grounded paths with respect to a face $F$ in a plane graph~$G$.
  We say that $P_1$ is \emph{lexicographically left of} $P_2$ (denoted $P_1 \leftlex P_2$) if one of the following conditions holds:
  \begin{itemize}
    \item $P_1$ is a~prefix of $P_2$; or
    \item consider a~plane graph $G_v$ created by taking $G$, together with its planar embedding, and adding a~fresh vertex $v'$ and a~fresh arc $v'v$, mapped to the plane so that the image of $v'$ is placed outside of $F$ (i.e., so that $v'$ is not enclosed by the boundary of $F$).
    	Let $P'_1$ and $P'_2$ be paths in $G_v$, constructed by prepending $v'$ to $P_1$ and $P_2$, respectively.
    	Let $q$ be the last vertex of the longest common prefix of $P'_1$ and $P'_2$, $p$ be the vertex preceding $q$ on the common prefix, and $r_1 \neq r_2$ be the vertices succeeding $q$ on $P'_1$ and $P'_2$, respectively.
    	Then, the arcs $pq$, $qr_1$ and $qr_2$ are embedded clockwise in this order around $q$.
  \end{itemize}
\end{definition}

\begin{figure}
\centering
  \begin{subfigure}[b]{0.49\textwidth}
    \centering
    \begin{tikzpicture}
		\def\height{3.25}
		\def\width{2.25}
    	\node[
    	draw, thick, myGrey,
    	shape=rectangle with rounded corners,
    	minimum height=6.5cm,
    	minimum width = 4.5cm,
    	rectangle with rounded corners north west=30pt,
    	rectangle with rounded corners south west=30pt,
    	rectangle with rounded corners north east=30pt,
    	rectangle with rounded corners south east=30pt,
    	label=45:{\textcolor{myGrey!80!black}{$F$}}] (center) at (0,0) {};
    	
    	\node[vertex,label=270:{$v$}] (v) at ($(center)+(-0.4,-\height)$) {};
    	\node[vertex,label=90:{$w$}] (w) at ($(center)+(0.3,\height)$) {};
    	
    	\definecolor{p1}{rgb}{1.0, 0.49, 0.0}
    	\definecolor{p2}{rgb}{0.55, 0.71, 0.0}
    	\definecolor{p3}{rgb}{0.0, 0.72, 0.92}
    	
    	\node (p-1-1) at ($(center)+(-0.5*\width,-0.6*\height)$) {};
    	\node (p-1-2) at ($(center)+(-0.3*\width,-0.25*\height)$) {};
    	\node (p-1-3) at ($(center)+(-0.75*\width,0.3*\height)$) {};
    	\node (p-1-4) at ($(center)+(-0.3*\width,0.32*\height)$) {};
    	\node (p-1-5) at ($(center)+(-0.32*\width,0.65*\height)$) {};
    	
    	\draw[directededge,p1] (v) to (p-1-1.center);
    	\foreach[evaluate={\j=int(\i+1);}] \i in {1,...,3}
    	{
    		\draw[directededge,p1] (p-1-\i.center) to (p-1-\j.center);
    	}
    	\draw[directededge,p1] (p-1-5.center) to (w);
    	\twocolourededge{p1}{p3}{p-1-4.center}{p-1-5.center}
    	
    	\node (p-2-1) at ($(center)+(-0.15*\width,-0.72*\height)$) {};
    	\node (p-2-2) at ($(center)+(0.17*\width,-0.3*\height)$) {};
    	\node (p-2-3) at ($(center)+(0.02*\width,0.05*\height)$) {};
    	\node (p-2-4) at ($(center)+(0.26*\width,0.33*\height)$) {};
    	\node (p-2-5) at ($(center)+(0.1*\width,0.64*\height)$) {};
    	\node (p-2-6) at ($(center)+(0.4*\width,0.75*\height)$) {};
    	
    	\draw[directededge,p2] (v) to (p-2-1.center);
    	\foreach[evaluate={\j=int(\i+1);}] \i in {2,...,5}
    	{
    		\draw[directededge,p2] (p-2-\i.center) to (p-2-\j.center);
    	}
    	\twocolourededge{p3}{p2}{p-2-6.center}{w};
    	\twocolourededge{p2}{p3}{v}{p-2-1.center}
    	\twocolourededge{p2}{p3}{p-2-1.center}{p-2-2.center}
    	
    	\node (p-3-3) at ($(center)+(0.6*\width,-0.25*\height)$) {};
    	\node (p-3-4) at ($(center)+(0.25*\width,0.01*\height)$) {};
    	
    	\draw[directededge,p3] (p-2-2.center) to (p-3-3.center);
    	\draw[directededge,p3] (p-3-3.center) to (p-3-4.center); 
    	\draw[directededge,p3] (p-3-4.center) to (p-2-4.center); 
    	\draw[directededge,p3] (p-2-4.center) to (p-1-4.center);
    	\draw[directededge,p3] (p-1-5.center) to (p-2-6.center); 
    	
    	\node (p-1-label) at ($(p-1-2)!0.5!(p-1-3)+(-0.3,-0.2)$) {\textcolor{p1!90!red}{$P_1$}};
    	\node (p-2-label) at ($(p-2-3)!0.3!(p-2-4)+(-0.4,0.1)$) {\textcolor{p2!90!black}{$P_2$}};
    	\node (p-3-label) at ($(p-3-3)!0.3!(p-3-4)+(0.3,0.2)$) {\textcolor{p3!90!black}{$P_3$}};

    \end{tikzpicture}
    \caption{
    \label{fig:left_most_def}}
  \end{subfigure}%
  \hfill%
  \begin{subfigure}[b]{0.49\textwidth}
    \centering
    \begin{tikzpicture}
    	\def\height{3.25}
    	\def\width{2.25}
    	\node[
    	draw, thick, myGrey,
    	shape=rectangle with rounded corners,
    	minimum height=6.5cm,
    	minimum width = 4.5cm,
    	rectangle with rounded corners north west=30pt,
    	rectangle with rounded corners south west=30pt,
    	rectangle with rounded corners north east=30pt,
    	rectangle with rounded corners south east=30pt,
    	label=45:{\textcolor{myGrey!80!black}{$F$}}] (center) at (0,0) {};
    	
    	\node[vertex,label=270:{$v$}] (v) at ($(center)+(-0.4,-\height)$) {};
    	\node[vertex,label=90:{$w$}] (w) at ($(center)+(0.3,\height)$) {};
    	
    	\definecolor{p1}{rgb}{1.0, 0.49, 0.0}
    	\definecolor{p2}{rgb}{0.55, 0.71, 0.0}
    	\definecolor{p3}{rgb}{0.0, 0.72, 0.92}
    	
    	\node (p-1) at ($(center)+(0.1*\width,-0.6*\height)$) {};
    	\node (p-2) at ($(center)+(-0.6*\width,-0.59*\height)$) {};
    	\node (p-3) at ($(center)+(-0.75*\width,-0.4*\height)$) {};
    	\node (p-4) at ($(center)+(-1*\width,-0.3*\height)$) {};
    	\node (p-5) at ($(center)+(-1*\width,-0.1*\height)$) {};
    	\node (p-6) at ($(center)+(-0.27*\width,0.1*\height)$) {};
    	\node (p-7) at ($(center)+(-0.1*\width,0.4*\height)$) {};
    	\node (p-8) at ($(center)+(0.5*\width,0.3*\height)$) {};
    	\node (p-9) at ($(center)+(1*\width,0.65*\height)$) {};
    	\node (p-10) at ($(center)+(-0.2*\width,0.73*\height)$) {};
    	
    	\draw[directededge] (v) to (p-1.center);
    	\foreach[evaluate={\j=int(\i+1);}] \i in {1,...,9}
    	{
    		\draw[directededge] (p-\i.center) to (p-\j.center);
    	}
    	\draw[directededge] (p-10.center) to (w);

    	\node (p-label) at ($(p-6)!0.5!(p-7)+(0.3,0)$) {$P$};
    	
    	\begin{pgfonlayer}{background}
    		\begin{scope}
    			    \path [clip] (v.center) -- (p-1.center) -- (p-2.center) -- (p-3.center) -- (p-4.center) -- (p-5.center) -- (p-6.center) -- (p-7.center) -- (p-8.center) -- (p-9.center) -- (p-10.center) -- (w.center) -- ($(0,1.1*\height)$) -- ($(-1.1*\width,1.1*\height)$) -- ($(-1.1*\width,-1.1*\height)$) -- ($(0,-1.1*\height)$) -- cycle;
    				\node[
		    		fill=myLightBlue!80!white,
		    		shape=rectangle with rounded corners,
		    		minimum height=6.5cm,
		    		minimum width = 4.5cm,
		    		rectangle with rounded corners north west=30pt,
		    		rectangle with rounded corners south west=30pt,
		    		rectangle with rounded corners north east=30pt,
		    		rectangle with rounded corners south east=30pt] (filling) at (0,0) {};
    		\end{scope}
    	\end{pgfonlayer}
    \end{tikzpicture}
    \caption{
    \label{fig:left_area}}
  \end{subfigure}
  \hfill
  \caption{\subref{fig:left_most_def} Three $(v,w)$-grounded paths. We have $P_1 \leftlex P_2 \leftlex P_3$, $P_1 \lefttop P_2$, $P_1 \lefttop P_3$, but $P_2\not\lefttop P_3$.
  	\subref{fig:left_area} A $(v,w)$-grounded path $P$ and its left area $\leftarea{P}$ (filled in \textcolor{myLightBlue}{blue}).}
  \label{fig:left-defs}
\end{figure}
Intuitively, $P_1 \leftlex P_2$ if at the end of the common prefix of $P_1$ and $P_2$, the path $P_1$ terminates or branches off left of $P_2$ (Figure~\ref{fig:left_most_def}).
Observe that \leftlex induces a~linear order on all paths grounded at a~vertex $v \in V(F)$.
Thus, given any vertex $w \in V(G)$ reachable from $v$ in $G$, we define $L_{vw}$, the \emph{lexicographically leftmost} path from $v$ to $w$, as the unique minimal path from $v$ to $w$ with respect to \leftlex.
Given a~combinatorial embedding of $G$, we can compute the lexicographically leftmost path from $v$ to $w$ in time $\Oh(n)$: it suffices to run a~depth-first search of $G$ in which each vertex considers all its neighbors in the left-to-right order.

We follow with a~more restrictive way of comparing paths in $G$.
Given a~$(v,w)$-grounded path $P$, we define the \emph{area left of $P$}, denoted $\leftarea{P}$, as the region of the plane whose anti-clockwise boundary is the closed walk defined as the concatenation of $P$ and the anti-clockwise segment of the boundary of $F$ from $w$ to $v$ (Figure~\ref{fig:left_area}).
Note that the interior of $\leftarea{P}$ may be disconnected if $P$ internally intersects the boundary of $F$.

\begin{definition}
  Given two $(v,w)$-grounded paths $P_1$ and $P_2$, we say that $P_1$ is \emph{topologically left of} $P_2$ (denoted $P_1 \lefttop P_2$) if $\leftarea{P_1} \subsetneq \leftarea{P_2}$.
\end{definition}

Note that if $P_1 \lefttop P_2$, then $P_1 \leftlex P_2$.
Thus, $\lefttop$ is a~partial order on the set of all $(v,w)$-grounded paths, and $\leftlex$ restricted to those paths is a linear extension of $\lefttop$.
It is straightforward to observe that the lexicographically leftmost path from $v$ to $w$ --- the minimum $(v,w)$-grounded path in $\leftlex$ --- is also the unique minimum $(v,w)$-grounded path in $\lefttop$.
So if both $v$ and $w$ lie on $F$, we call $L_{vw}$ simply the \emph{leftmost path} from $v$ to~$w$.

Analogously, we define the \emph{(lexicographically) rightmost path} from $v$ to $w$, denoted $R_{vw}$, and the \emph{area right of} a~$(v,w)$-grounded path $P$, denoted $\rightarea{P}$.
Note that each $(v,w)$-grounded path $P$ splits the graph into two parts: the area $\leftarea{P}$ left of $P$ and the area $\rightarea{P}$ right of $P$, with both areas intersecting only at~$P$.

\paragraph*{Simplifying the outer face.}
Both the lexicographical and the topological comparisons of grounded paths require the outer face $F$ of a~plane graph $G$ to be simple.
However, this precondition might not be satisfied in the setting of our problem.
To alleviate this issue, we show how to simplify the outer face by adding two directed arcs to the plane graph.

Let $u$ and $v$ be two vertices of $F$.
The \emph{$(u,v)$-simplification} of the outer face in $G$, denoted $\simplify{G}{u}{v}$, is the digraph obtained from $G$ by adding two arcs $f_1$, $f_2$, each from $u$ to $v$.
Both arcs are embedded in the outer face $F$ of $G$, so that the outer face of $\simplify{G}{u}{v}$ is bounded by $f_1$ and $f_2$ (see Figure~\ref{fig:simplification-example}).
Thus, the outer face of $\simplify{G}{u}{v}$ is simple and contains two vertices, $u$ and $v$.
We remark that $\simplify{G}{u}{v}$ is not necessarily a~simple digraph; however, this does not pose a~problem since after the simplification, the number of edges of the graph is still linearly bounded in the number of vertices of the graph.

\begin{figure}
  \centering
  \begin{subfigure}[h]{0.49\textwidth}
    \centering
    \begin{tikzpicture}{scale=0.8}
    	\def\udist{2}
    	\def\vdist{0.8}
    	
    	\node[vertex] (u) at (0,0) {};
    	\node (u-label) at ($(u)+(360/9*6+90:0.4)$) {$u$};
    	\node[vertex] (u-1) at ($(u)+(360/9*7+90:\udist)$) {};
    	\node[vertex] (u-2) at ($(u)+(360/9*8+90:\udist)$) {};
    	\node[vertex] (u-3) at ($(u)+(360/9*1+90:\udist)$) {};
    	\node[vertex] (u-4) at ($(u)+(360/9*2+90:\udist)$) {};
    	\node[vertex] (u-5) at ($(u)+(360/9*4+90:\udist)$) {};
    	\node[vertex] (u-6) at ($(u)+(360/9*5+90:\udist)$) {};
    	
    	\node[vertex] (v) at ($(u-1)+(0:1.2*\vdist)$) {};
    	\node (v-label) at ($(v)+(225:0.3)$) {$v$};
    	\node (v-center) at ($(v)+(360:\vdist)$) {};
    	\node[vertex] (v-1) at ($(v-center)+(360/6*0:\vdist)$) {};
    	\node[vertex] (v-2) at ($(v-center)+(360/6*1:\vdist)$) {};
    	\node[vertex] (v-3) at ($(v-center)+(360/6*2:\vdist)$) {};
    	\node[vertex] (v-5) at ($(v-center)+(360/6*4:\vdist)$) {};
    	\node[vertex] (v-6) at ($(v-center)+(360/6*5:\vdist)$) {};
    	
    	\draw[directededge] (u) to (u-2);
    	\draw[directededge] (u-2) to (u-1);
    	\draw[directededge] (u-1) to (u);
    	\draw[directededge] (u) to (u-3);
    	\draw[directededge] (u-3) to (u-4);
    	\draw[directededge] (u-4) to (u);
    	\draw[directededge] (u) to (u-6);
    	\draw[directededge] (u-6) to (u-5);
    	\draw[directededge] (u-5) to (u);
    	
    	\draw[directededge] (u-1) to (v);
    	
    	\draw[directededge] (v) to (v-3);
    	\draw[directededge] (v-3) to (v-2);
    	\draw[directededge] (v-2) to (v-1);
    	\draw[directededge] (v-1) to (v-6);
    	\draw[directededge] (v-6) to (v-5);
    	\draw[directededge] (v-5) to (v);
    \end{tikzpicture}
  \end{subfigure}%
  \hfill%
  \begin{subfigure}[h]{0.49\textwidth}
    \centering
    \begin{tikzpicture}{scale=0.8}
    	\def\udist{2}
    	\def\vdist{0.8}
    	
    	\node[vertex] (u) at (0,0) {};
    	\node (u-label) at ($(u)+(360/9*6+90:0.4)$) {$u$};
    	\node[vertex] (u-1) at ($(u)+(360/9*7+90:\udist)$) {};
    	\node[vertex] (u-2) at ($(u)+(360/9*8+90:\udist)$) {};
    	\node[vertex] (u-3) at ($(u)+(360/9*1+90:\udist)$) {};
    	\node[vertex] (u-4) at ($(u)+(360/9*2+90:\udist)$) {};
    	\node[vertex] (u-5) at ($(u)+(360/9*4+90:\udist)$) {};
    	\node[vertex] (u-6) at ($(u)+(360/9*5+90:\udist)$) {};
    	
    	\node[vertex] (v) at ($(u-1)+(0:1.2*\vdist)$) {};
    	\node (v-label) at ($(v)+(225:0.3)$) {$v$};
    	\node (v-center) at ($(v)+(360:\vdist)$) {};
    	\node[vertex] (v-1) at ($(v-center)+(360/6*0:\vdist)$) {};
    	\node[vertex] (v-2) at ($(v-center)+(360/6*1:\vdist)$) {};
    	\node[vertex] (v-3) at ($(v-center)+(360/6*2:\vdist)$) {};
    	\node[vertex] (v-5) at ($(v-center)+(360/6*4:\vdist)$) {};
    	\node[vertex] (v-6) at ($(v-center)+(360/6*5:\vdist)$) {};
    	
    	\draw[directededge] (u) to (u-2);
    	\draw[directededge] (u-2) to (u-1);
    	\draw[directededge] (u-1) to (u);
    	\draw[directededge] (u) to (u-3);
    	\draw[directededge] (u-3) to (u-4);
    	\draw[directededge] (u-4) to (u);
    	\draw[directededge] (u) to (u-6);
    	\draw[directededge] (u-6) to (u-5);
    	\draw[directededge] (u-5) to (u);
    	
    	\draw[directededge] (u-1) to (v);
    	
    	\draw[directededge] (v) to (v-3);
    	\draw[directededge] (v-3) to (v-2);
    	\draw[directededge] (v-2) to (v-1);
    	\draw[directededge] (v-1) to (v-6);
    	\draw[directededge] (v-6) to (v-5);
    	\draw[directededge] (v-5) to (v);
    	
    	\draw (u) edge[directededge,myOrange,quick curve through={($(u)+(105:1)$) ($(u-3)+(90:\vdist)$) ($(u-4)+(180:\vdist)$) ($(u-5)+(270:\vdist)$) ($(u-6)+(270:\vdist)$) ($(v-1)+(0:\vdist)$) ($(v-2)+(90:\vdist)$) ($(v-3)+(90:0.6*\vdist)$) ($(v)+(75:1)$)}] (v);
    	\draw (u) edge[directededge,myOrange,quick curve through={($(u)+(70:1)$) ($(u-2)+(90:0.6*\vdist)$) ($(v)+(120:1)$)}] (v);
    	
    	\node[myOrange] (f-1) at ($(u-2)+(80:1)$) {$f_1$};
    	\node[myOrange] (f-2) at ($(u-6)+(290:1.2)$) {$f_2$};
    \end{tikzpicture}
  \end{subfigure}
  \caption{An~example plane digraph (left) and its $(u,v)$-simplification (right).}
  \label{fig:simplification-example}
\end{figure}
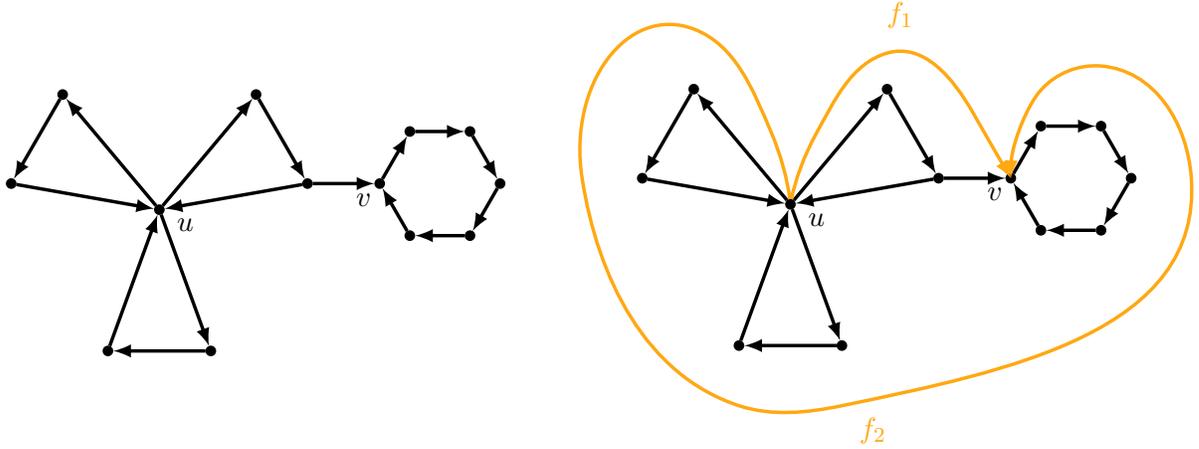

\paragraph*{Cutting arcs of plane graphs.}
We finally describe an~operation on plane graphs that introduces new vertices to the outer face $F$ of the embedding.
Assume that $F$ is simple.
Take two vertices $u \in V(F)$, $v \notin V(F)$, both incident to an~arc $e \in E(G)$.
An~operation of \emph{cutting the graph along} $e$ produces a~new graph $G_e$ which is the~result of the following process (see Figure~\ref{fig:arc-cutting}):
\begin{enumerate}
  \item Enumerate all arcs incident to $u$ in anti-clockwise order: $e_1, e_2, \dots, e_k$, where $e_1, e_k \in E(F)$.
  \item Let $\ell \in \{2, 3, \dots, k-1\}$ be such that $e_\ell = e$.
  \item Remove $u$ from the graph, along with all arcs incident to $u$.
  \item Introduce two new vertices $u_1$, $u_2$ to the graph.
  	For each $i \in \{1, 2, \dots, \ell\}$, add to the graph a~new arc $e_i^1$ which is obtained from the arc $e_i$ by replacing the endpoint $u$ with~$u_1$.
  	Similarly, for each $i \in \{\ell, \ell+1, \dots, k\}$, add to the graph a~new arc $e_i^2$ which is obtained from the arc $e_i$ by replacing the endpoint $u$ with $u_2$.
\end{enumerate}
Intuitively, $G_e$ is produced by drawing the graph $G$ on a~piece of paper and cutting the piece of paper along $e$.
Note that after this operation, the resulting graph is still planar (Figure~\ref{fig:arc-cutting}).
Moreover, the outer face $F_e$ of $G_e$ is simple, and $V(F) \setminus V(F_e) = \{u\}$ and $V(F_e) \setminus V(F) = \{u_1, v, u_2\}$.
That is, $v$ now lies on the outer face of the new graph.

We can generalize this procedure to paths instead of just edges.
Assume that $P = v_1 v_2 \dots v_k$ is a~$v_1$-grounded path which is disjoint with $V(F)$ except for $v_1$.
Then, cutting the graph along $P$ entails cutting the graph along the arcs $v_1v_2, v_2v_3, \dots, v_{k-1}v_k$, in this order.
This way we obtain a plane graph where the destination of the path lies on the outer face.
See Figure~\ref{fig:arc-cutting} for an illustration.

It is immediate that given the combinatorial embedding of a plane graph $G$, one can compute a plane embedding of the graph obtained by cutting $G$ along a path $P$ in time linear with respect to the size of the graph.
Thus we again obtain a plane graph.

\begin{figure}
\centering
  \begin{subfigure}[b]{0.3\textwidth}
    \centering
    \begin{tikzpicture}{scale=0.8}
    	\def\dist{1}
    	\definecolor{p-colour}{rgb}{1.0, 0.49, 0.0}
    	\definecolor{e-colour}{rgb}{0.55, 0.71, 0.0}
    	
    	\node[vertex,myGrey] (v-1) at (0,0) {};
    	\node[vertex,myGrey] (v-2) at ($(v-1)+(360:\dist)$) {};
    	\node[vertex,myGrey] (v-3) at ($(v-1)+(360/6*1:\dist)$) {};
    	\node[vertex,myGrey] (v-4) at ($(v-1)+(360/6*2:\dist)$) {};
    	\node (v) at ($(v-4)+(210:0.38)$) {\textcolor{e-colour}{$v$}};
    	\node[vertex,myGrey] (v-5) at ($(v-1)+(360/6*3:\dist)$) {};
    	\node[vertex,myGrey] (v-6) at ($(v-1)+(360/6*4:\dist)$) {};
    	\node[vertex,myGrey] (v-7) at ($(v-1)+(360/6*5:\dist)$) {};
    	\node[vertex,myGrey] (v-8) at ($(v-7)+(360:\dist)$) {};
    	\node[vertex,myGrey] (v-9) at ($(v-8)+(360/6*1:\dist)$) {};
    	\node[vertex,myGrey] (v-10) at ($(v-9)+(360/6*2:\dist)$) {};
    	\node[vertex,myGrey] (v-11) at ($(v-10)+(360/6*2:\dist)$) {};
    	\node[vertex,myGrey] (v-12) at ($(v-11)+(360/6*3:\dist)$) {};
    	\node (u) at ($(v-12)+(90:0.3)$) {\textcolor{e-colour}{$u$}};
    	\node[vertex,myGrey] (v-13) at ($(v-12)+(360/6*3:\dist)$) {};
    	\node[vertex,myGrey] (v-14) at ($(v-13)+(360/6*4:\dist)$) {};
    	\node[vertex,myGrey] (v-15) at ($(v-14)+(360/6*4:\dist)$) {};
    	\node[vertex,myGrey] (v-16) at ($(v-15)+(360/6*5:\dist)$) {};
    	\node[vertex,myGrey] (v-17) at ($(v-16)+(360/6*5:\dist)$) {};
    	\node[vertex,myGrey] (v-18) at ($(v-17)+(360:\dist)$) {};
    	\node[vertex,myGrey] (v-19) at ($(v-18)+(360:\dist)$) {};
    	
    	\draw[directededge,myGrey] (v-13) to (v-12);
    	\draw[directededge,myGrey] (v-12) to (v-11);
    	\draw[directededge,myGrey] (v-14) to (v-13);
    	\draw[directededge,myGrey] (v-4) to (v-13);
    	\draw[directededge,e-colour,ultra thick] (v-4) to (v-12);
    	\node (e) at ($(v-4)!0.5!(v-12)+(315:0.2)$) {\textcolor{e-colour}{$e$}};
    	\draw[directededge,myGrey] (v-12) to (v-3);
    	\draw[directededge,myGrey] (v-3) to (v-11);
    	\draw[directededge,myGrey] (v-11) to (v-10);
    	\draw[directededge,myGrey] (v-10) to (v-3);
    	\draw[directededge,myGrey] (v-3) to (v-4);
    	\draw[directededge,myGrey] (v-4) to (v-14);
    	\draw[directededge,myGrey] (v-14) to (v-15);
    	\draw[directededge,myGrey] (v-14) to (v-5);
    	\draw[directededge,myGrey] (v-5) to (v-1);
    	\draw[directededge,myGrey] (v-4) to (v-1);
    	\draw[directededge,myGrey] (v-3) to (v-1);
    	\draw[directededge,p-colour,ultra thick] (v-2) to (v-1);
    	\draw[directededge,myGrey] (v-2) to (v-3);
    	\draw[directededge,myGrey] (v-10) to (v-2);
    	\draw[directededge,myGrey] (v-9) to (v-10);
    	\draw[directededge,myGrey] (v-15) to (v-5);
    	\draw[directededge,myGrey] (v-9) to (v-2);
    	\draw[directededge,myGrey] (v-16) to (v-15);
    	\draw[directededge,myGrey] (v-5) to (v-16);
    	\draw[directededge,myGrey] (v-5) to (v-4);
    	\draw[directededge,p-colour,ultra thick] (v-6) to (v-5);
    	\draw[directededge,p-colour,ultra thick] (v-1) to (v-6);
    	\draw[directededge,myGrey] (v-7) to (v-1);
    	\draw[directededge,p-colour,ultra thick] (v-7) to (v-2);
    		\node (p) at ($(v-7)!0.5!(v-2)+(140:0.3)$) {\textcolor{p-colour}{$P$}};
    	\draw[directededge,myGrey] (v-2) to (v-8);
    	\draw[directededge,myGrey] (v-8) to (v-9);
    	\draw[directededge,myGrey] (v-8) to (v-7);
    	\draw[directededge,myGrey] (v-7) to (v-6);
    	\draw[directededge,myGrey] (v-6) to (v-16);
    	\draw[directededge,myGrey] (v-17) to (v-16);
    	\draw[directededge,myGrey] (v-18) to (v-17);
    	\draw[directededge,myGrey] (v-6) to (v-18);
    	\draw[directededge,p-colour,ultra thick] (v-18) to (v-7);
    	\draw[directededge,myGrey] (v-19) to (v-18);
    	\draw[directededge,myGrey] (v-17) to (v-6);
    	\draw[directededge,myGrey] (v-7) to (v-19);
    	\draw[directededge,myGrey] (v-19) to (v-8);
    \end{tikzpicture}
    \caption{}
  \end{subfigure}%
  \hfill%
  \begin{subfigure}[b]{0.3\textwidth}
    \centering
    \begin{tikzpicture}{scale=0.8}
    	\def\dist{1}
    	\definecolor{p-colour}{rgb}{1.0, 0.49, 0.0}
    	\definecolor{e-colour}{rgb}{0.55, 0.71, 0.0}
    	
    	\node[vertex,myGrey] (v-1) at (0,0) {};
    	\node[vertex,myGrey] (v-2) at ($(v-1)+(360:\dist)$) {};
    	\node[vertex,myGrey] (v-3) at ($(v-1)+(360/6*1:\dist)$) {};
    	\node[vertex,myGrey] (v-4) at ($(v-1)+(360/6*2:\dist)$) {};
    	\node (v) at ($(v-4)+(210:0.38)$) {\textcolor{e-colour}{$v$}};
    	\node[vertex,myGrey] (v-5) at ($(v-1)+(360/6*3:\dist)$) {};
    	\node[vertex,myGrey] (v-6) at ($(v-1)+(360/6*4:\dist)$) {};
    	\node[vertex,myGrey] (v-7) at ($(v-1)+(360/6*5:\dist)$) {};
    	\node[vertex,myGrey] (v-8) at ($(v-7)+(360:\dist)$) {};
    	\node[vertex,myGrey] (v-9) at ($(v-8)+(360/6*1:\dist)$) {};
    	\node[vertex,myGrey] (v-10) at ($(v-9)+(360/6*2:\dist)$) {};
    	\node[vertex,myGrey] (v-11) at ($(v-10)+(360/6*2:\dist)$) {};
    	\node (v-12) at ($(v-11)+(360/6*3:\dist)$) {};
    	\node[vertex,myGrey] (v-12-a) at ($(v-12)+(180:0.2)$) {};
    	\node[vertex,myGrey] (v-12-b) at ($(v-12)+(0:0.2)$) {};
    	\node (u-a) at ($(v-12-a)+(90:0.3)$) {\textcolor{e-colour}{$u_2$}};
    	\node (u-b) at ($(v-12-b)+(90:0.3)$) {\textcolor{e-colour}{$u_1$}};
    	\node[vertex,myGrey] (v-13) at ($(v-12)+(360/6*3:\dist)$) {};
    	\node[vertex,myGrey] (v-14) at ($(v-13)+(360/6*4:\dist)$) {};
    	\node[vertex,myGrey] (v-15) at ($(v-14)+(360/6*4:\dist)$) {};
    	\node[vertex,myGrey] (v-16) at ($(v-15)+(360/6*5:\dist)$) {};
    	\node[vertex,myGrey] (v-17) at ($(v-16)+(360/6*5:\dist)$) {};
    	\node[vertex,myGrey] (v-18) at ($(v-17)+(360:\dist)$) {};
    	\node[vertex,myGrey] (v-19) at ($(v-18)+(360:\dist)$) {};
    	
    	\draw[directededge,myGrey] (v-13) to (v-12-a);
    	\draw[directededge,myGrey] (v-12-b) to (v-11);
    	\draw[directededge,myGrey] (v-14) to (v-13);
    	\draw[directededge,myGrey] (v-4) to (v-13);
    	\draw[directededge,e-colour,ultra thick] (v-4) to (v-12-a);
    	\draw[directededge,e-colour,ultra thick] (v-4) to (v-12-b);
    	\draw[directededge,myGrey] (v-12-b) to (v-3);
    	\draw[directededge,myGrey] (v-3) to (v-11);
    	\draw[directededge,myGrey] (v-11) to (v-10);
    	\draw[directededge,myGrey] (v-10) to (v-3);
    	\draw[directededge,myGrey] (v-3) to (v-4);
    	\draw[directededge,myGrey] (v-4) to (v-14);
    	\draw[directededge,myGrey] (v-14) to (v-15);
    	\draw[directededge,myGrey] (v-14) to (v-5);
    	\draw[directededge,myGrey] (v-5) to (v-1);
    	\draw[directededge,myGrey] (v-4) to (v-1);
    	\draw[directededge,myGrey] (v-3) to (v-1);
    	\draw[directededge,myGrey] (v-2) to (v-1);
    	\draw[directededge,myGrey] (v-2) to (v-3);
    	\draw[directededge,myGrey] (v-10) to (v-2);
    	\draw[directededge,myGrey] (v-9) to (v-10);
    	\draw[directededge,myGrey] (v-15) to (v-5);
    	\draw[directededge,myGrey] (v-9) to (v-2);
    	\draw[directededge,myGrey] (v-16) to (v-15);
    	\draw[directededge,myGrey] (v-5) to (v-16);
    	\draw[directededge,myGrey] (v-5) to (v-4);
    	\draw[directededge,myGrey] (v-6) to (v-5);
    	\draw[directededge,myGrey] (v-1) to (v-6);
    	\draw[directededge,myGrey] (v-7) to (v-1);
    	\draw[directededge,myGrey] (v-7) to (v-2);
    	\draw[directededge,myGrey] (v-2) to (v-8);
    	\draw[directededge,myGrey] (v-8) to (v-9);
    	\draw[directededge,myGrey] (v-8) to (v-7);
    	\draw[directededge,myGrey] (v-7) to (v-6);
    	\draw[directededge,myGrey] (v-6) to (v-16);
    	\draw[directededge,myGrey] (v-17) to (v-16);
    	\draw[directededge,myGrey] (v-18) to (v-17);
    	\draw[directededge,myGrey] (v-6) to (v-18);
    	\draw[directededge,myGrey] (v-18) to (v-7);
    	\draw[directededge,myGrey] (v-19) to (v-18);
    	\draw[directededge,myGrey] (v-17) to (v-6);
    	\draw[directededge,myGrey] (v-7) to (v-19);
    	\draw[directededge,myGrey] (v-19) to (v-8);
    \end{tikzpicture}
    \caption{}
  \end{subfigure}%
  \hfill%
  \begin{subfigure}[b]{0.3\textwidth}
    \centering
    \begin{tikzpicture}{scale=0.8}
    	\def\dist{1}
    	\definecolor{p-colour}{rgb}{1.0, 0.49, 0.0}
    	\definecolor{e-colour}{rgb}{0.55, 0.71, 0.0}
    	
    	\node (v-1) at (0,0) {};
    	\node[vertex,myGrey] (v-1-a) at ($(v-1)+(120:0.1)$) {};
    	\node[vertex,myGrey] (v-1-b) at ($(v-1)+(315:0.2)$) {};
    	\node (v-2) at ($(v-1)+(360:\dist)$) {};
    	\node[vertex,myGrey] (v-2-a) at ($(v-2)+(35:0.25)$) {};
    	\node[vertex,myGrey] (v-2-b) at ($(v-2)+(210:0.25)$) {};
    	\node[vertex,myGrey] (v-3) at ($(v-1)+(360/6*1:\dist)$) {};
    	\node[vertex,myGrey] (v-4) at ($(v-1)+(360/6*2:\dist)$) {};
    	\node[vertex,myGrey] (v-5) at ($(v-1)+(360/6*3:\dist)$) {};
    	\node (v-6) at ($(v-1)+(360/6*4:\dist)$) {};
    	\node[vertex,myGrey] (v-6-a) at ($(v-6)+(90:0.3)$) {};
    	\node[vertex,myGrey] (v-6-b) at ($(v-6)+(270:0.2)$) {};
    	\node (v-7) at ($(v-1)+(360/6*5:\dist)$) {};
    	\node[vertex,myGrey] (v-7-a) at ($(v-7)+(325:0.18)$) {};
    	\node[vertex,myGrey] (v-7-b) at ($(v-7)+(140:0.18)$) {};
    	\node[vertex,myGrey] (v-8) at ($(v-7)+(360:\dist)$) {};
    	\node[vertex,myGrey] (v-9) at ($(v-8)+(360/6*1:\dist)$) {};
    	\node[vertex,myGrey] (v-10) at ($(v-9)+(360/6*2:\dist)$) {};
    	\node[vertex,myGrey] (v-11) at ($(v-10)+(360/6*2:\dist)$) {};
    	\node[vertex,myGrey] (v-12) at ($(v-11)+(360/6*3:\dist)$) {};
    	\node[vertex,myGrey] (v-13) at ($(v-12)+(360/6*3:\dist)$) {};
    	\node[vertex,myGrey] (v-14) at ($(v-13)+(360/6*4:\dist)$) {};
    	\node[vertex,myGrey] (v-15) at ($(v-14)+(360/6*4:\dist)$) {};
    	\node[vertex,myGrey] (v-16) at ($(v-15)+(360/6*5:\dist)$) {};
    	\node[vertex,myGrey] (v-17) at ($(v-16)+(360/6*5:\dist)$) {};
    	\node (v-18) at ($(v-17)+(360:\dist)$) {};
    	\node[vertex,myGrey] (v-18-a) at ($(v-18)+(0:0.2)$) {};
    	\node[vertex,myGrey] (v-18-b) at ($(v-18)+(180:0.2)$) {};
    	\node[vertex,myGrey] (v-19) at ($(v-18)+(360:\dist)$) {};
    	
    	\draw[directededge,myGrey] (v-13) to (v-12);
    	\draw[directededge,myGrey] (v-12) to (v-11);
    	\draw[directededge,myGrey] (v-14) to (v-13);
    	\draw[directededge,myGrey] (v-4) to (v-13);
    	\draw[directededge,myGrey] (v-4) to (v-12);
    	\draw[directededge,myGrey] (v-12) to (v-3);
    	\draw[directededge,myGrey] (v-3) to (v-11);
    	\draw[directededge,myGrey] (v-11) to (v-10);
    	\draw[directededge,myGrey] (v-10) to (v-3);
    	\draw[directededge,myGrey] (v-3) to (v-4);
    	\draw[directededge,myGrey] (v-4) to (v-14);
    	\draw[directededge,myGrey] (v-14) to (v-15);
    	\draw[directededge,myGrey] (v-14) to (v-5);
    	\draw[directededge,myGrey] (v-5) to (v-1-a);
    	\draw[directededge,myGrey] (v-4) to (v-1-a);
    	\draw[directededge,myGrey] (v-3) to (v-1-a);
    	\draw[directededge,p-colour,ultra thick] (v-2-a) to (v-1-a);
    	\draw[directededge,p-colour,ultra thick] (v-2-b) to (v-1-b);
    	\draw[directededge,myGrey] (v-2-a) to (v-3);
    	\draw[directededge,myGrey] (v-10) to (v-2-a);
    	\draw[directededge,myGrey] (v-9) to (v-10);
    	\draw[directededge,myGrey] (v-15) to (v-5);
    	\draw[directededge,myGrey] (v-9) to (v-2-a);
    	\draw[directededge,myGrey] (v-16) to (v-15);
    	\draw[directededge,myGrey] (v-5) to (v-16);
    	\draw[directededge,myGrey] (v-5) to (v-4);
    	\draw[directededge,p-colour,ultra thick] (v-6-a) to (v-5);
    	\draw[directededge,p-colour,ultra thick] (v-6-b) to (v-5);
    	\draw[directededge,p-colour,ultra thick] (v-1-a) to (v-6-a);
    	\draw[directededge,p-colour,ultra thick] (v-1-b) to (v-6-b);
    	\draw[directededge,myGrey] (v-7-b) to (v-1-b);
    	\draw[directededge,p-colour,ultra thick] (v-7-a) to (v-2-a);
    	\draw[directededge,p-colour,ultra thick] (v-7-b) to (v-2-b);
    	\draw[directededge,myGrey] (v-2-a) to (v-8);
    	\draw[directededge,myGrey] (v-8) to (v-9);
    	\draw[directededge,myGrey] (v-8) to (v-7-a);
    	\draw[directededge,myGrey] (v-7-b) to (v-6-b);
    	\draw[directededge,myGrey] (v-6-b) to (v-16);
    	\draw[directededge,myGrey] (v-17) to (v-16);
    	\draw[directededge,myGrey] (v-18-b) to (v-17);
    	\draw[directededge,myGrey] (v-6-b) to (v-18-b);
    	\draw[directededge,p-colour,ultra thick] (v-18-a) to (v-7-a);
    	\draw[directededge,p-colour,ultra thick] (v-18-b) to (v-7-b);
    	\draw[directededge,myGrey] (v-19) to (v-18-a);
    	\draw[directededge,myGrey] (v-17) to (v-6-b);
    	\draw[directededge,myGrey] (v-7-a) to (v-19);
    	\draw[directededge,myGrey] (v-19) to (v-8);
    \end{tikzpicture}
    \caption{}
  \end{subfigure}
  \caption{(a) A~directed plane graph $G$. (b) The result of cutting $G$ along $e$. (c) The result of cutting $G$ along $P$.}
  \label{fig:arc-cutting}
\end{figure}
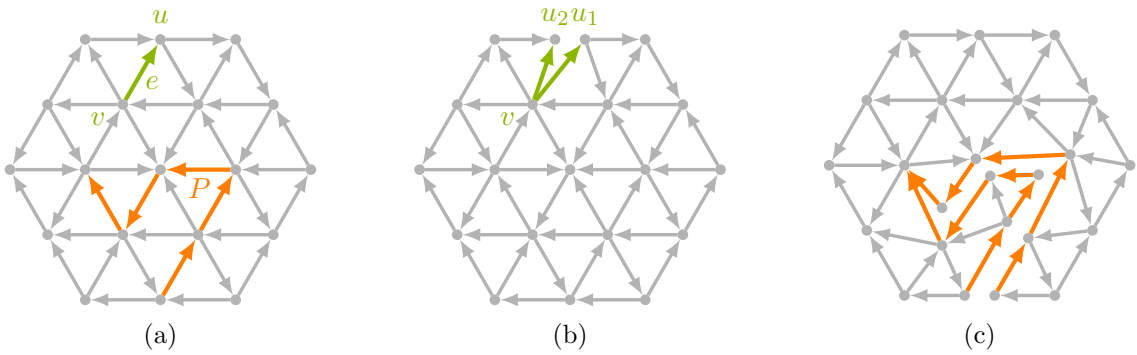

\section{Directed Detour}\label{sec:detour-algo}

In this section we prove Theorem~\ref{thm:main-ddetour}.

Let $G$ be a~plane digraph, and let $s, t \in V(G)$ be a pair of vertices. Recall that we assume $G$ to be weakly connected and we fix the plane embedding of $G$. Therefore, from now on for simplicity we identify features in $G$ (vertices, edges, paths, etc.) with their images under the embedding.
Our goal is to decide whether there is an $s$-to-$t$ path in $G$ of length at least $\dist{G}{s}{t} + 1$.
We are going to reduce this question to solving a set of instances of the \dpaths{2} problem.
The reduction is similar to the algorithm of Fomin et al.~\cite{DetoursDirected} for \dldetour.

First, we run breadth-first search (BFS) starting from $s$ in $G$.
Let $L_i$ denote the $i$-th layer of this BFS, that is, $L_i = \{ v \in V(G) \mid \dist{G}{s}{v} = i \}$ for $i \in \Set{0, 1, \ldots, n}$.

Suppose now that the instance $(G, s, t)$ is a~yes-instance, and let $P$ be an~$s$-to-$t$ path witnessing this fact.
Since $P$ is a~non-shortest path from $s$ to $t$, there is an~index $p\in \{0,1,\ldots,n\}$ such that the layer $L_p$ contains at least two vertices of $P$.
Let us choose $p$ to be the smallest such an~index.
Let $G_{\geq p}$ be the~plane digraph obtained from $G$ by removing all the vertices in $L_0 \cup L_1 \cup \ldots \cup L_{p-1}$.
We may assume that $\Gp$ is weakly connected, for otherwise we discard all of its weakly connected components that do not contain the vertex~$t$.

\begin{claim}\label{claim:one-face}
  All vertices of $L_p$ lie on one face of~$G_{\geq p}$.
\end{claim}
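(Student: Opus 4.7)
My approach is to use a topological argument based on a thin neighborhood of the BFS tree. Let $T$ be the BFS tree of $G$ rooted at $s$ restricted to the layers $L_0 \cup L_1 \cup \ldots \cup L_p$, and let $T^-$ be its subtree obtained by deleting the leaves sitting in $L_p$; then $T^-$ spans exactly $L_{<p} := L_0 \cup L_1 \cup \ldots \cup L_{p-1}$ and is connected. The plan is to study a sufficiently thin open tubular neighborhood $N$ of the image of $T^-$ in the fixed plane embedding of $G$. Since $T^-$ is a tree, $N$ is a topological disk, and by shrinking $N$ we can guarantee two properties: (a) the only vertices of $G$ contained in $N$ are the vertices of $L_{<p}$, and (b) the only arcs of $G$ that meet $N$ are the arcs incident to some vertex of $L_{<p}$. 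Property (a) is immediate, because vertices of $G$ are isolated points; property (b) follows from compactness, since any arc of $G$ not incident to $L_{<p}$ is disjoint from the compact set $T^-$ by planarity (no two arcs cross in the embedding).

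Next I would show that $N$ lies inside a single face $F^\ast$ of $G_{\geq p}$. Since the vertex set of $G_{\geq p}$ is $V(G)\setminus L_{<p}$ and every arc of $G_{\geq p}$ has both endpoints outside $L_{<p}$, properties (a) and (b) imply that $N$ is disjoint from the plane drawing of $G_{\geq p}$. As $N$ is connected, it must lie entirely inside some single face of $G_{\geq p}$, which we call $F^\ast$.

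The final step is to argue that every $v \in L_p$ lies on $\partial F^\ast$. Each such $v$ has a BFS parent $u \in L_{p-1}$, and the arc $uv$ is present in $G$ but removed in $G_{\geq p}$. By planarity, the interior of $uv$ contains no vertex of $G$ and crosses no other arc of $G$, so it is disjoint from the drawing of $G_{\geq p}$. Moreover $u \in N \subseteq F^\ast$. Thus any point on $uv$ sufficiently close to $v$ can be joined to $u$ along the arc without meeting the drawing of $G_{\geq p}$, so it lies in the same face, namely $F^\ast$. Letting this point tend to $v$ shows $v \in \partial F^\ast$, completing the argument.

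The step I expect to be most delicate is the precise construction of the neighborhood $N$ with the two separation properties (a) and (b); once those are in place, the passage from $N$ to a single face and from $F^\ast$ to the boundary incidences of all $L_p$-vertices is a routine application of the principle that a connected subset of the complement of a plane drawing lies in one face.
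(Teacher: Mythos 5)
Your proof is correct and follows essentially the same route as the paper: the paper anchors the argument at $s$ by taking, for each $v \in L_p$, a shortest $s$-to-$v$ path, all of whose vertices except $v$ are deleted when forming $G_{\geq p}$, so that $v$ lies on the boundary of the single face of $G_{\geq p}$ containing $s$ --- your BFS tree $T^-$ together with the parent arcs into $L_p$ is the same connected deleted region, just packaged differently. The only remark worth making is that the tubular neighborhood $N$ is unnecessary overhead: the drawing of $T^-$ itself is already connected and disjoint from the drawing of $G_{\geq p}$ (its vertices are deleted and arc interiors meet no other vertices or arcs), so you can argue directly with it, or with one shortest $s$-to-$v$ path per vertex as the paper does.
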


\begin{proof}
  Consider any vertex $v$ of $L_p$ and any shortest path $Q$ from $s$ to $v$ in $G$. Observe that all vertices of $Q$ except for $v$ lie in layers $L_0,L_1,\ldots,L_{p-1}$, hence they are removed when constructing $G_{\geq p}$ from $G$. We conclude that $v$ lies on the boundary of the (unique) face of $G_{\geq p}$ that contains $s$.
\end{proof}

Denote by $x$ and $y$, respectively, the first and the second vertex on the path $P$ that lie in the layer $L_p$.
In our algorithm we iterate over all possible choices for the vertex~$y$.
Note that the choice of $y$ determines the value of $p$, because $y \in L_p$.
Observe that if we guess the vertex $y$ correctly, then in order to find a~non-shortest path from $s$ to $t$ it is enough to find a vertex $x \in L_p$ ($x \neq y$) and three paths $\pstart$, $\pmid$ and $\pend$ such that
\begin{itemize}
 \item $\pstart$ is a~shortest path from $s$ to $x$ in $G$; and
 \item $\pmid$, $\pend$ are two internally vertex-disjoint paths in $G_{\geq p}$ going respectively from $x$ to $y$ and from $y$ to $t$.
\end{itemize}
Note that $y$ might be equal to $t$ and then $\pend$ is a trivial path only consisting of the single vertex $t$.

On one hand, vertices $x$ and $y$ divide $P$ into subpaths $\pstart$, $\pmid$, and $\pend$ satisfying the properties stated above. On the other hand, if we find a vertex $x$ and paths $\pstart,\pmid,\pend$ satisfying the above, then their concatenation forms a~valid solution. This is because the path $\pstart$ goes consecutively through layers $L_0, L_1, \ldots, L_p$ due to being a shortest path, and thus it is internally vertex-disjoint from $\pmid$ and $\pend$. Also, the concatenation of $\pstart,\pmid,\pend$ is a non-shortest path from $s$ to $t$ due to containing at least two different vertices from layer $L_p$.

It would be natural to also iterate through all possible choices for $x$, but for the sake of optimizing the running time we do the following instead.
Construct a~digraph $H$  from the digraph $G_{\geq p}$ by adding to it a~vertex $\xsuper$ together with arcs $\xsuper u$ for all $u \in L_p\setminus \{y\}$.

\begin{claim}\label{claim:super-vertex}
  $H$ is a~planar digraph and its implicit embedding can be obtained from the~implicit embedding of $G_{\geq p}$ in linear time.
\end{claim}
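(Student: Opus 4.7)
The plan is to use Claim~\ref{claim:one-face}: all vertices of $L_p$ lie on the boundary of a single face $F^\star$ of $G_{\geq p}$, namely the face that, in the plane embedding, ``swallowed'' the points of the deleted layers $L_0, \ldots, L_{p-1}$. Knowing this, I would place $\xsuper$ as an interior point of $F^\star$ and route each new arc $\xsuper u$, for $u \in L_p \setminus \{y\}$, as a curve lying entirely inside $F^\star$ with its other endpoint at $u \in \bnd F^\star$. Since all targets lie on the boundary of the same face, these curves can be drawn pairwise internally disjoint apart from the shared endpoint $\xsuper$, which witnesses the planarity of $H$.

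To build the combinatorial embedding of $H$ in linear time, I would first identify $F^\star$. This can be done either by bookkeeping at the moment the layers $L_0, \ldots, L_{p-1}$ are deleted from $G$, or, working only from the embedding of $G_{\geq p}$, by tracing the face on the ``deleted side'' of any arc incident to a vertex of $L_p$. Next I would walk once around $\bnd F^\star$, and whenever the walk encounters an occurrence of a vertex $u \in L_p \setminus \{y\}$, insert $\xsuper u$ into the rotation at $u$ between the two arcs of $\bnd F^\star$ incident to that occurrence, while simultaneously appending $u$ to the (initially empty) cyclic list of neighbours of $\xsuper$. Rotations at vertices outside $L_p \cup \{\xsuper\}$ are untouched. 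The total running time is proportional to the length of $\bnd F^\star$ plus $|L_p|$, both bounded by $\Oh(n)$.

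The one subtlety to manage is that $\bnd F^\star$ need not be a simple cycle: a vertex $u \in L_p$ may appear on the boundary more than once, giving several candidate slots in the rotation at $u$ where $\xsuper u$ could be inserted. Any single consistent choice preserves planarity, so the cleanest convention is to commit to the first occurrence encountered during the boundary walk and insert $\xsuper u$ into the rotation at that slot, ignoring the remaining occurrences. This case is the only slightly delicate point of the argument; everything else is direct from the planarity of $G_{\geq p}$ and the fact that we only add curves inside a single face.
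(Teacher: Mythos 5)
Your proposal is correct and follows essentially the same route as the paper: place $\xsuper$ in the unique face of $G_{\geq p}$ whose boundary contains all of $L_p$ (guaranteed by Claim~\ref{claim:one-face}) and draw the arcs to $L_p \setminus \{y\}$ inside that face, updating the combinatorial embedding in linear time. Your additional care about a vertex of $L_p$ occurring several times on the (possibly non-simple) face boundary, and committing to a single occurrence during the boundary walk, is exactly the kind of detail the paper leaves implicit as ``straightforward.''
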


\begin{proof}
  It suffices to extend the implicit embedding of $G_{\geq p}$ by embedding $\xsuper$ anywhere in the unique face whose boundary contains all vertices of $L_p$ (which exists by Claim~\ref{claim:one-face}) and draw arcs connecting $\xsuper$ with vertices of $L_p\setminus \{y\}$ within this face. It is straightforward to see that this can be done in the setting of implicit embeddings in linear time.
\end{proof}

Observe that pairs $(P_1, P_2)$ of internally vertex-disjoint paths going respectively from $\xsuper$ to $y$ in $H$ and from $y$ to $t$ in $H$ correspond one-to-one to pairs $(\pmid, \pend)$ of internally vertex-disjoint paths going respectively from some vertex $x \in L_p$ ($x \neq y$) to $y$ in $G_{\geq p}$ and from $y$ to $t$ in $G_{\geq p}$.
We may additionally assume that the face of $H$ containing both $\xsuper$ and $y$ is the outer face of $H$.
This way we reduced our original problem to a set of at most $n$ instances (one instance per each choice of the vertex $y$) of the following problem:

\begin{tcolorbox}
\noindent \proba

\smallskip

\noindent\textbf{Input:} A~plane digraph~$H$ and three vertices $x, y, t \in V(H)$, where $x$ and $y$ lie on the outer face of $H$.

\smallskip

\noindent\textbf{Question:} Are there two internally vertex-disjoint paths $P_1$ and $P_2$ such that $P_1$ is an~$x$-to-$y$ path and $P_2$ is a~$y$-to-$t$ path in $H$?
\end{tcolorbox}

So to conclude Theorem~\ref{thm:main-ddetour} it is enough to show the following lemma.

\begin{lemma}
  \proba can be solved in time $\Oh(n)$.
\end{lemma}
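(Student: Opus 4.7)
The plan is to reduce to two reachability checks by exploiting the fact that $x$ and $y$ both lie on the outer face. I focus on the main case $t \notin \{x, y\}$; the case $t = y$ is trivial (return \textsf{yes} iff $x$ reaches $y$ in $H$), and the case $t = x$ is handled by a routine variant of the argument below.

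Set $H' := H \setminus \{t\}$ and apply the $(x, y)$-simplification from Section~\ref{sec:prelim} so that the outer face of $H'$ becomes simple. In $\Oh(n)$ time compute the leftmost $x$-to-$y$ path $L_{xy}$ and the rightmost $x$-to-$y$ path $R_{xy}$ in $H'$, ignoring the two virtual arcs added by the simplification. If no $x$-to-$y$ path exists in $H'$, return \textsf{no}. Otherwise, for each $Q \in \{L_{xy}, R_{xy}\}$ construct $H_Q := H \setminus (V(Q) \setminus \{y\})$ and test with a single BFS/DFS in $\Oh(n)$ time whether $t$ is reachable from $y$ in $H_Q$. If either test succeeds, return \textsf{yes}: the path $P_1 := Q$ together with any $y$-to-$t$ path in $H_Q$ as $P_2$ is a valid certificate, since $V(Q) \cap V(P_2) \subseteq \{y\}$ (as $t \notin V(Q)$ by construction). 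Otherwise, return \textsf{no}. The total running time is $\Oh(n)$.

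The key step is completeness. Let $(P_1', P_2')$ be any valid pair, so $V(P_1') \cap V(P_2') = \{y\}$; in particular $t \notin V(P_1')$, so $P_1'$ lies in $H'$. Since $P_2'$ is internally disjoint from $P_1'$, its interior lies in exactly one of the open regions $\leftarea{P_1'} \setminus P_1'$ or $\rightarea{P_1'} \setminus P_1'$ by the Jordan curve theorem. Suppose the latter; the other case is handled symmetrically by $R_{xy}$. Because $L_{xy}$ is topologically leftmost among $x$-to-$y$ paths in $H'$, we have $\leftarea{L_{xy}} \subseteq \leftarea{P_1'}$, and since $L_{xy} \subseteq \leftarea{L_{xy}}$, the whole curve $L_{xy}$ lies in $\leftarea{P_1'}$. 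Combined with $\leftarea{P_1'} \cap \rightarea{P_1'} = P_1'$, this forces $L_{xy} \cap (\rightarea{P_1'} \setminus P_1') = \emptyset$, so the internal vertices of $P_2'$ avoid $V(L_{xy})$. Together with $t \notin V(L_{xy})$ and the common endpoint $y$, this gives $V(P_2') \cap V(L_{xy}) \subseteq \{y\}$, so $P_2'$ witnesses reachability of $t$ from $y$ in $H_{L_{xy}}$ and the first test succeeds.

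The main obstacle is the topological observation that the topologically leftmost $x$-to-$y$ path, viewed as a plane curve, lies entirely in the closed left area of every other $x$-to-$y$ path; this is what lets a single leftmost choice dominate all possible $P_1'$ whose right side contains the interior of $P_2'$. Once this is in place, the remainder is a clean case split together with minor bookkeeping about endpoints (including the fact that $x$ is correctly removed whenever $t \neq x$, because $x \in V(Q) \setminus \{y\}$, so $P_2$ cannot accidentally pass through $x$).
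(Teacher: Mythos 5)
Your proposal follows essentially the same route as the paper: simplify the outer face, compute the lexicographically leftmost and rightmost $x$-to-$y$ paths, use the fact that the lexicographically leftmost path is also the topological minimum to argue that one of these two can replace $P_1$ in any solution, and finish with a reachability check from $y$ to $t$ after removing the other vertices of the chosen path. Your only deviations --- adding the two virtual arcs from $x$ to $y$ and then ignoring them (the paper adds them from $y$ to $x$, so they are automatically unusable by $P_1$), and deleting $t$ up front --- are cosmetic and do not change the argument.
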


\begin{proof}
  Let $H^\star$ be a~$(y,x)$-simplification of the outer face of $H$.
  Naturally, the outer face of $H^\star$ is simple.
  Moreover, $(H^\star, x, y, t)$ is a~yes-instance if and only if $(H, x, y, t)$ is a~yes-instance: $H^\star$ is exactly the graph $H$ with two additional $yx$ arcs, which plainly cannot be a~part of the sought solution.

  Suppose now that $(H^\star, x, y, t)$ is a~yes-instance, and let $(P_1, P_2)$ be a~pair of paths in~$H^\star$ that form a solution to \proba on $H^\star$.
  $P_1$ connects $x$ to $y$ which lie on the outer face $F_O$ of $H^\star$ (which is a simple cycle), thus it is $\Brace{x,y}$-grounded with respect to $F_O$.
  Therefore, it splits the whole digraph~$H$ into two parts, the area $\leftarea{P_1}$ left of $P_1$ and the area $\rightarea{P_1}$ right of $P_1$, with both areas intersecting only at $P_1$.
  Since $P_2$ is internally disjoint with $P_1$, it must be entirely contained within $\leftarea{P_1}$ or $\rightarea{P_1}$; without loss of generality assume that $P_2$ is contained in the latter.
  Then, let $L_{x,y}$ be the leftmost path from $x$ to $y$ in $H^\star$.

  \begin{claim}
    $(L_{x,y}, P_2)$ is also a~valid solution to \proba.
  \end{claim}

  \begin{proof}
    By the definition of the leftmost path we know that $\leftarea{L_{x,y}} \subseteq \leftarea{P_1}$, that is, $L_{x,y}$ cannot use any vertex lying on the right side of $P_1$.
    Since all vertices of $P_2$ (apart from $y$) lie on the right side of $P_1$ we conclude the claim.
  \end{proof}
  
  An~analogous argument shows that if $P_2$ is contained in $\leftarea{P_1}$, then $(R_{x,y}, P_2)$ is a~valid solution to \proba, where $R_{x,y}$ is the rightmost path from $x$ to $y$ in $H^\star$.
  By the observation above, we know that in order to verify whether $(H^\star, x, y, t)$ is a~yes-instance it is enough to find $L_{x,y}$ and $R_{x,y}$ as candidates for $P_1$, and to check for each candidate whether the vertex $t$ is reachable from $y$ in $H^\star - (V(P_1) \setminus \{y\})$. 
  All these checks can be done in time $\Oh(n)$ by running a proper depth-first search algorithm.
\end{proof}

\section{Directed Long Detour}\label{sec:long-detour-algo}

In this section we prove Theorem~\ref{thm:main-dldetour}.

We are given a~plane digraph $G$, two vertices $s, t \in V(G)$ and an~integer $k \in \mathds{N}$.
As in Section~\ref{sec:detour-algo}, we assume that $G$ is weakly connected, and we fix the plane embedding of $G$.
We need to decide whether there is an~$s$-to-$t$ path in $G$ of length at least $\dist{G}{s}{t} + k$.
We are going to reduce this question to solving a set of instances of the \dpaths{2} problem (with some additional requirements).

We begin with running the algorithm of Bez\'akov\'a et al.~\cite{BezakovaCDF19} to check whether there is an~$s$-to-$t$ path of length $\dist{G}{s}{t} + l$ for some $l \in \Set{k, k + 1, \ldots, 3k - 1}$.
From now on, we assume there is no such $s$-to-$t$ path.

As in Section~\ref{sec:detour-algo}, we run a BFS starting from $s$ in $G$ and set $L_i = \{ v \in V(G) \mid \dist{G}{s}{v} = i\}$ for $i = 0, 1, \ldots, n$.
For $i = 1, 2, \ldots, n$, let $G_{\geq i}$ be the plane digraph obtained from $G$ by removing all the vertices in $L_0 \cup \ldots \cup L_{i-1}$.
Again, without loss of generality, we assume that $G_{\geq i}$ is weakly connected.

Now, we iterate through all values for $p = 1, 2, \ldots, n$, and through all choices for $x, y \in L_p$, where $x \neq y$ (there are at most $n^2$ choices for $(p, x, y)$).
Recall from Section~\ref{sec:detour-algo} (Section~\ref{claim:one-face}) that both $x$ and $y$ lie on the outer face of $\Gp$. 
Next, let $\Gp^{x,y}$ be the $(y,x)$-simplification of the outer face of $\Gp$.
We are now going to search for three paths $\pstart$, $\pmid$ and $\pend$ satisfying the conditions:
\begin{enumerate}
  \item[$(a)$] $\pstart$ is a~shortest $s$-to-$x$ path in $G$;
  \item[$(b)$] $\pmid$ is an~$(x,y)$-grounded path in $\Gp^{x,y}$ of length at least $2k$;
  \item[$(c)$] $\pend$ is a~$y$-to-$t$ path in $\Gp^{x,y}$;
  \item[$(d)$] paths $\pmid$ and $\pend$ are internally vertex-disjoint; and
  \item[$(e)$] if $\pend \subseteq \rightarea{\pmid}$, then there is no $(x, y)$-grounded path $\pmid'$ in $\Gp^{x,y}$ of length at least $k$ such that $\pmid' \lefttop \pmid$; and if $\pend \subseteq \leftarea{\pmid}$, then there is no such path with $\pmid \lefttop \pmid'$.
\end{enumerate}

We remark that the two $yx$ arcs added in the process of $(y,x)$-simplification of $\Gp$ cannot be a~part of any of the paths $\pstart$, $\pmid$, $\pend$ and hence their existence can be safely ignored in the following series of claims.

First, we show that the procedure described above is actually equivalent to solving the \dldetour problem on the instance $(G, s, t, k)$.

\begin{claim}\label{claim:concat-solution}
  Assume that the paths $\pstart$, $\pmid$ and $\pend$ satisfy the properties $(a)$ -- $(d)$ above.
  Then the concatenation
  \[ P = \pstart \circ \pmid \circ \pend \]
  forms a valid solution for the \dldetour problem.
\end{claim}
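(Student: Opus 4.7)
The plan is to verify the three defining properties of a valid solution to \dldetour for $P$: that it is a simple path, that it connects $s$ to $t$, and that it has length at least $\dist{G}{s}{t} + k$. The endpoint assertions follow directly by concatenation: $\pstart$ starts at $s$ by (a), the three paths join correctly at $x$ and $y$, and $\pend$ ends at $t$ by (c).

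The main step is simplicity, for which I would exploit the layered BFS structure. First I would invoke the remark preceding the claim: none of $\pstart$, $\pmid$, $\pend$ can use either of the two simplification arcs of $\Gp^{x,y}$, so all three paths are subpaths of $G$, with $\pmid$ and $\pend$ lying entirely in $\Gp$. Since $\pstart$ is a shortest $s$-to-$x$ path and $x \in L_p$, it traverses exactly one vertex per layer $L_0, \ldots, L_p$, and in particular its only vertex in $L_p \cup L_{p+1} \cup \ldots$ is $x$. Because $V(\pmid) \cup V(\pend) \subseteq V(\Gp) = L_p \cup L_{p+1} \cup \ldots$, this gives $V(\pstart) \cap (V(\pmid) \cup V(\pend)) \subseteq \{x\}$, with $x$ playing the role of the shared endpoint of $\pstart$ and $\pmid$. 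Combined with (d), read in the natural sense that $V(\pmid) \cap V(\pend) = \{y\}$, the concatenation $\pstart \circ \pmid \circ \pend$ is a simple $s$-to-$t$ path.

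For the length bound, I would note that $\length{\pstart} = p$ (since $x \in L_p$ and $\pstart$ is a shortest $s$-to-$x$ path), $\length{\pmid} \geq 2k$ by (b), and $\length{\pend} \geq \dist{G}{y}{t}$, because $\pend$ is a $y$-to-$t$ path in $\Gp \subseteq G$ (after ignoring the simplification arcs). Applying the triangle inequality $\dist{G}{s}{t} \leq \dist{G}{s}{y} + \dist{G}{y}{t} = p + \dist{G}{y}{t}$ then yields
\[
\length{P} = p + \length{\pmid} + \length{\pend} \geq 2k + \bigl(p + \dist{G}{y}{t}\bigr) \geq \dist{G}{s}{t} + 2k \geq \dist{G}{s}{t} + k.
\]

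The only obstacle is the simplicity argument, and specifically the interpretation of the internal vertex-disjointness in (d); the layered structure of BFS trivializes everything else. Note that neither the preprocessing assumption (no $s$-to-$t$ path of length between $\dist{G}{s}{t}+k$ and $\dist{G}{s}{t}+3k-1$) nor condition (e) is used for this direction of the equivalence: they presumably enter only in the converse claim, which needs to show that whenever a detour exists, some triple satisfying (a)--(e) can be found by the algorithm.
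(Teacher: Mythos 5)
Your proposal is correct and follows essentially the same route as the paper's proof: simplicity comes from property $(d)$ together with the observation that the shortest path $\pstart$ meets $V(\Gp) = L_p \cup \ldots \cup L_n$ only in $x$, and the length bound comes from property $(b)$ plus $\dist{G}{s}{x} = \dist{G}{s}{y}$ and the triangle inequality. The extra remarks you add (ignoring the simplification arcs, checking endpoints) are just more explicit versions of what the paper leaves implicit.
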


\begin{proof}
First, $P$ is a~simple $s$-to-$t$ path.
Indeed, $\pmid$ and $\pend$ are internally vertex-disjoint by property $(d)$.
Moreover, since the path $\pstart$ is a shortest $s$-to-$x$ path in $G$, it does not use any vertex of $L_p \cup \ldots \cup L_n = V(\Gp)$ apart from $x$, and thus $\pstart$ is internally vertex-disjoint from $\pmid$ and $\pend$ as well.

By property $(b)$ the length of $P$ is at least
\[
\dist{G}{s}{x} + 2k + \dist{G}{y}{t} = \dist{G}{s}{y} + 2k + \dist{G}{y}{t} \geq \dist{G}{s}{t} + 2k,
\]
which finishes the proof.
\end{proof}

\begin{claim}
  If $(G, s, t, k)$ is a~yes-instance, then there exist an integer $p \in \mathds{N}$ and vertices $x, y \in L_p$ for which there are paths $\pstart$, $\pmid$ and $\pend$ satisfying the properties $(a)$ -- $(e)$.
\end{claim}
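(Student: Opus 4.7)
The plan is to read off $(p, x, y, \pstart, \pmid, \pend)$ from a minimum-length witness $s$-to-$t$ simple path $P$, and then upgrade $\pmid$ via a topological ``leftmost-of-length-at-least-$k$'' selection so that property~$(e)$ holds automatically. The length bound in~$(b)$ is forced in both cases by a ``swap and contradict'' argument that exploits the preprocessing assumption that $G$ has no simple $s$-to-$t$ path of length in $[d+k, d+3k-1]$, where $d := \dist{G}{s}{t}$.

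First, I would let $P$ be a shortest simple $s$-to-$t$ path of length at least $d+k$; such a $P$ exists because $(G,s,t,k)$ is a yes-instance, and the no-detour assumption forces $|P| \geq d+3k$. Crucially, the minimality of $|P|$ implies that no simple $s$-to-$t$ path has length in $[d+k,|P|-1]$. Following Section~\ref{sec:detour-algo}, I let $p$ be the smallest BFS layer containing at least two vertices of $P$, and let $x,y$ be the first and the last $P$-vertex on $L_p$. The argument of Claim~\ref{claim:one-face} then shows that $\pstart := P[s \to x]$ is a shortest $s$-to-$x$ path, giving~$(a)$, and that $P[x \to t] \subseteq \Gp$; since $y$ is the \emph{last} $P$-vertex on $L_p$, the subpath $\pend := P[y \to t]$ leaves $L_p$ at $y$ and stays strictly above $L_p$ afterwards. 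With the initial candidate $\pmid^\circ := P[x \to y]$, properties $(c)$ and $(d)$ are immediate because $\pstart, \pmid^\circ, \pend$ are internally disjoint subpaths of the simple path $P$.

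To force~$(b)$ for $\pmid^\circ$, I splice a shortest $s$-to-$y$ path $Q'$ with $\pend$. Since $Q'$ uses exactly one vertex from each of $L_0, L_1, \dots, L_p$ while $\pend$ touches $L_p$ only at its starting vertex $y$, the two paths share only $y$, so $Q' \circ \pend$ is a simple $s$-to-$t$ path of length $p + |\pend| = |P| - |\pmid^\circ|$. By the minimality of $|P|$ combined with the no-detour assumption, this length cannot lie in $[d+k,|P|-1]$; being at most $|P|-1$, it must be strictly less than $d+k$, which rearranges to $|\pmid^\circ| \geq |P| - d - k + 1 \geq 2k+1$.

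Finally, to secure~$(e)$, I assume without loss of generality that $\pend \subseteq \rightarea{\pmid^\circ}$ and take $\pmid$ to be the lexicographically leftmost $(x,y)$-grounded path in $\Gp^{x,y}$, of length at least $k$, that is internally disjoint from $\pend$ and for which $\pend \subseteq \rightarea{\pmid}$; such a $\pmid$ exists since $\pmid^\circ$ qualifies. By definition of $\lefttop$, any candidate $\pmid' \lefttop \pmid$ of length at least $k$ would satisfy $\rightarea{\pmid'} \supsetneq \rightarea{\pmid} \supseteq \pend$, which places the interior of $\pend$ in the open right area of $\pmid'$ and so makes $\pmid'$ itself disjoint from $\pend$ and a qualifying candidate, contradicting the lex-minimality of $\pmid$. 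Hence $\pmid$ is in fact $\lefttop$-minimal among \emph{all} length-$\geq k$ $(x,y)$-grounded paths, which gives~$(e)$. A second swap argument applied to the simple path $\pstart \circ \pmid \circ \pend$ forbids its length $|P| - |\pmid^\circ| + |\pmid|$ from $[d+k, |P|-1]$, and combined with $|\pmid| \geq k$ this forces $|\pmid| \geq |\pmid^\circ| \geq 2k+1$, reconfirming~$(b)$. The main obstacle will be executing the two layer-disjointness checks cleanly (so that both spliced paths are genuinely simple), and formalizing the topological step that passes from ``lex-leftmost in the restricted family'' to $\lefttop$-minimality in the full family.
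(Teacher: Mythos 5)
Your proposal is correct and follows essentially the same route as the paper: a shortest witness path $P$ (hence of length at least $\dist{G}{s}{t}+3k$ by the preprocessing assumption), the first BFS layer $L_p$ hit twice, a splice with a shortest $s$-to-$y$ path to lower-bound the middle segment, and an extremal replacement of that segment whose optimality yields property $(e)$ while minimality of $P$ restores the length bound. The remaining differences are bookkeeping: you take $y$ to be the last rather than the second $P$-vertex in $L_p$, and you pick $\pmid$ as the lexicographically leftmost length-$\geq k$ path compatible with $\pend$ and then recover $\length{\pmid}\geq 2k$ via a second splice, whereas the paper picks a $\lefttop$-minimal path of length at least $2k$ and uses its splice to exclude left paths of length in $[k,2k-1]$ --- mirror images of the same argument.
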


\begin{proof}
Assume that $(G, s, t, k)$ is a yes-instance.
Let $P$ be an~$s$-to-$t$ path witnessing this fact.
If there are many such paths, we choose $P$ to be a~shortest one.
As we established that there is no $s$-to-$t$ path of length $\dist{G}{s}{t} + l$ for any $l \in \Set{k, k + 1, \ldots, 3k - 1}$, we may assume that the path~$P$ is of length at least $\dist{G}{s}{t} + 3k$.
Since $P$ is a non-shortest $s$-to-$t$ path, we may define $p \in \{1, 2, \ldots, n\}$ to be the smallest index such that $L_p$ contains at least two vertices of the path $P$.
Let $x$ and $y$ be, respectively, the first and the second vertex on the path $P$ that lie in the layer~$L_p$.

By definition of $p$, the subpath $P[s \to x]$ is a shortest $s$-to-$x$ path, and thus we may set $\pstart \coloneqq P[s \to x]$.
We also set $\pend \coloneqq P[y \to t]$.
We will choose $\pmid$ later in the course of the proof.

Let us observe that the length of the subpath $P[x \to y]$ is at least $2k$: otherwise, we consider the path $P'$ being the concatenation of a shortest $s$-to-$y$ path in $G$ and the subpath $P[y \to t]$.
As in Claim~\ref{claim:concat-solution}, we argue that $P'$ is a simple $s$-to-$t$ path in $G$, and the length of $P'$ is
\begin{align*}
\dist{G}{s}{y} + \length{P[y \to t]} & = \dist{G}{s}{x} + \length{P[y \to t]} \\
& = \length{P} - \length{P[x \to y]} \\
& > (\dist{G}{s}{t} + 3k) - 2k = \dist{G}{s}{t} + k.
\end{align*}
Hence, $P'$ is also a valid solution for our instance $(G, s, t, k)$ that is shorter than $P$, which is a contradiction.

Since $P[x \to y]$ is an~$(x, y)$-grounded path in $\Gp^{x,y}$ and is internally disjoint from $P[y \to t]$, we may assume that $P[y \to t] \subseteq \rightarea{P[x \to y]}$ in $\Gp^{x,y}$; the case $P[y \to t] \subseteq \leftarea{P[x \to y]}$ is symmetric.

Observe now, that if we set $\pmid \coloneqq P[x \to y]$, then the properties $(a)$ -- $(d)$ are satisfied.
If the condition $(e)$ holds as well, then we are done.
Otherwise, we define $\pmid$ as a~minimal, with respect to $\lefttop$, $(x,y)$-grounded path in $\Gp^{x,y}$ of length at least $2k$.
Then, $\pmid \lefttop P[x \to y]$, and consequently $\pmid$ is internally vertex disjoint from $\pend = P[y \to t]$ as we have $P[y \to t] \subseteq \rightarea{P[x \to y]}$.

It remains to show that the pair $(\pmid, \pend)$ satisfies the property $(e)$.
By the definition of $\pmid$ we know there is no $(x,y)$-grounded path of length at least $2k$ which is topologically left of $\pmid$.
Suppose now that there exists an~$(x,y)$-grounded path $\pmid'$ in $\Gp^{x,y}$ such that $\pmid' \lefttop \pmid$ and $\length{\pmid'} \in [k, 2k - 1]$.
Consider the following walk $P'$:
\[ P' = P[s \to x] \circ \pmid' \circ P[y \to t]. \]
$P'$ is a simple path as in $\Gp$ we have $\pmid' \lefttop \pmid \lefttop P[x \to y]$ and $P[y \to t] \subseteq \rightarea{P[x \to y]}$, therefore $\pmid'$ is internally vertex-disjoint from both $P[s \to x]$ and $P[y \to t]$.
Moreover, the length of $P'$ is at least
\begin{align*}
\dist{G}{s}{x} + \length{\pmid'} + \dist{G}{y}{t} =& \\
\dist{G}{s}{y} + \dist{G}{y}{t} + \length{\pmid'} \geq&\ \dist{G}{s}{t} + \length{\pmid'} \geq \dist{G}{s}{t} + k.
\end{align*}
Consequently, $P'$ is a valid solution for the instance $(G, s, t, k)$.
Finally, since $\length{\pmid'} < 2k \leq \length{P[x \to y]}$, the path $P'$ is shorter than the path $P$ which is a contradiction.

\end{proof}

It remains to show how we can find paths $\pstart$, $\pmid$ and $\pend$ satisfying the above-mentioned conditions.
Let us assume that we guess the values of $p$, $x$ and $y$ correctly.
By Claim~\ref{claim:concat-solution}, we know that $\pstart$ can be set to any shortest $s$-to-$x$ path, and we see that in order to find the desired paths $\pmid$ and $\pend$ we can restrict ourselves to the digraph $\Gp^{x,y}$.
Let us call a pair of paths $(\pmid, \pend)$ \emph{special} for $(\Gp^{x,y}, x, y, t, k)$ if they satisfy the properties $(b)$ -- $(e)$ stated above.
This reduces our instance of the \dldetour problem to solving the set of at most $n^2$ instances (one for each choice of $x$ and $y$) of the following problem (with $H \coloneqq \Gp^{x,y}$).

\begin{tcolorbox}
\noindent \probb

\smallskip

\noindent\textbf{Input:} A~plane digraph~$H$, three vertices $x, y, t \in V(H)$, and an integer $k \in \mathds{N}$, where $x$ and $y$ lie on the outer face of $H$.
The outer face of $H$ is simple and contains only the vertices $x$ and $y$.

\smallskip

\noindent\textbf{Question:} Does there exist a~special pair of paths $(P_1, P_2)$ for $(H, x, y, t, k)$?

\end{tcolorbox}

To finish the proof of Claim~\ref{thm:main-dldetour} it is enough to show the following lemma.

\begin{lemma}
  \probb can be solved in time $2^{\Oh(k)} \cdot n^2$ by a Monte Carlo algorithm, or deterministically in time $2^{\Oh(k)} \cdot n^2 \log n$.
\end{lemma}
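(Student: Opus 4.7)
The plan is to extend the reduction to \proba from Section~\ref{sec:detour-algo}, using color coding to cope with the length-$\geq 2k$ constraint on $\pmid$. For the Monte Carlo version I would sample $2^{\Oh(k)}$ independent uniform random colourings $c\colon V(H)\to[2k]$; for the deterministic version, I would enumerate the $2^{\Oh(k)}\log n$ functions of an $(n,2k)$-perfect hash family. Each colouring is processed independently by the subroutine described below, and the algorithm returns YES iff some call returns YES.

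For a fixed colouring $c$, I would first run the standard color-coding dynamic program with states $(v,S)$, where $v\in V(H)$ and $S\subseteq[2k]$: an entry is true if $H$ admits a simple $x$-to-$v$ path of length exactly $|S|$ whose vertices bear exactly the colours in $S$. Filling this table takes $2^{\Oh(k)}\cdot n$ time. For each $v$ for which some entry with $|S|=2k$ is true, I would recover a canonical witness $Q$, chosen as the $\leftlex$-minimum colourful $x$-to-$v$ prefix avoiding $y$ (which is easy to enforce inside the DP). I would then cut $H$ along $Q$, delete from the resulting plane graph both duplicates of each internal vertex of $Q$ as well as both duplicates of $x$, and apply the $(y,v)$-simplification so that the outer face becomes a simple face containing $v$ and $y$. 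Finally, I would invoke the \proba algorithm from Section~\ref{sec:detour-algo} on this plane graph with terminals $(v,y,t)$ in $\Oh(n)$ time; if it returns a pair of internally disjoint $(v,y)$- and $(y,t)$-paths, I would concatenate $Q$ with the first to form $\pmid$ and output the pair $(\pmid,\pend)$, which immediately satisfies conditions $(b)$--$(d)$. Iterating over the at most $n$ candidate endpoints $v$ costs $\Oh(n^2)$ per colouring, giving $2^{\Oh(k)}\cdot n^2$ or $2^{\Oh(k)}\cdot n^2\log n$ total.

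For correctness, if a special pair $(\pmid^*,\pend^*)$ exists then with probability $\ge 2^{-\Oh(k)}$ a colouring assigns pairwise distinct colours to the first $2k$ vertices of $\pmid^*$ after $x$, and the hash family guarantees this deterministically; the DP then detects the $(2k)$-th vertex $v^*$. The hard part will be proving that the canonical witness $Q$, which may differ from $\pmid^*[x\to v^*]$, still yields a YES \proba-instance. I expect this to follow from a planarity argument analogous to the one in Section~\ref{sec:detour-algo}: taking $Q$ as the $\leftlex$-minimum colourful prefix maximises the region to its right in the cut graph, so the suffix $\pmid^*[v^*\to y]$ and $\pend^*$ can be rerouted within the cut-and-reduced graph to produce the two required internally disjoint paths. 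Once such a $(b)$--$(d)$ pair has been obtained, the stronger property~$(e)$ can be recovered by the same minimisation in $\lefttop$ used in the claim preceding the lemma, certifying existence of a special pair as required by \probb.
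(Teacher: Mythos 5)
There is a genuine gap, and it sits exactly at the step you yourself flag as ``the hard part.'' Your colouring certifies only the \emph{prefix} of $P_1=\pmid$ (the first $2k$ vertices after $x$ get distinct colours), so the canonical witness $Q$ --- the $\leftlex$-minimum colourful $x$-to-$v^*$ path --- need not coincide with $P_1[x\to v^*]$, and the planarity/rerouting argument you hope for does not go through. Concretely: $Q$ may branch off to the left of $P_1$ and later re-enter $P_1$ at a vertex $r$ on the suffix $P_1[v^*\to y]$, possibly at distance less than $k$ from $y$ (every vertex of $H$ carries some colour, so nothing stops $Q$ from using such an $r$). The rerouted path $Q[x\to r]\circ P_1[r\to y]$ is topologically left of $P_1$, but its length can be smaller than $k$, so property $(e)$ of the special pair does \emph{not} forbid it --- hence such a $Q$ can genuinely exist. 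When you then cut along $Q$ and delete its internal vertices, you delete $r$, which may destroy every $v^*$-to-$y$ path that is compatible with a disjoint $y$-to-$t$ path (and $Q$ reaching $P_2$ is the same phenomenon, since entering $\rightarea{P_1}$ forces a vertex of $P_1$ first). So the constructed \proba instance can be a no-instance even though a special pair exists, and your ``leftmost colourful prefix maximises the region to its right'' claim is not a substitute for a proof.

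The paper avoids this with a different colouring: two colours, requiring the first $k$ vertices of $P_1$ to be green \emph{and the last $k$ vertices to be blue}. The blue suffix is precisely the missing lever: any all-green path deviating left of $P_1[x\to x']$ must re-enter $P_1$ at a green vertex, hence outside the blue suffix, so the rerouted left path has length at least $k$ and contradicts property $(e)$. This forces the canonical witness (leftmost all-green path, computable by a plain leftmost DFS --- no subset DP or ``$\leftlex$-minimum colourful path'' recovery, which you also leave unjustified) to be \emph{equal} to $P_1[x\to x']$, and similarly pins down $P_1[x'\to y]$ after cutting. Your scheme, which constrains only the prefix, provides no analogous length certificate for rerouted left paths; to repair it you would essentially have to reintroduce a marked suffix, i.e., the paper's two-colour trick. (Minor additional omissions: the symmetric case $P_2\subseteq\leftarea{P_1}$, handled in the paper by rightmost searches, and the check that the assembled pair indeed has $\length{P_1}\ge 2k$.)
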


\begin{proof}
We use a variant of the standard color coding technique \cite{ColorCoding}.
Let us color independently every vertex of $H$ with one of two colors, say green and blue, each with probability $\frac{1}{2}$.

Assume for now that $(H, x, y, t, k)$ is a yes-instance, and let $(P_1, P_2)$ be a special pair of paths for our instance.
As per our previous considerations, we know that $P_2 \subseteq \leftarea{P_1}$ or $P_2 \subseteq \rightarea{P_1}$.
Without loss of generality assume that $P_2 \subseteq \rightarea{P_1}$.
Recall that $\length{P_1} \geq 2k$.
Therefore, we may define $x'$ to be the $k$-th vertex on $P_1$ and $y'$ to be the $k$-th vertex from the end of $P_1$.
Then, $\length{P_1[x \to x']} = \length{P_1[y' \to y]} = k-1$ and the paths $P_1[x \to x']$ and $P_1[y' \to y]$ are vertex-disjoint.

\newcommand{\greenset}{\Gc}

Suppose that the colors are assigned in such a way that all vertices of the path $P_1[x \to x']$ are green, and all vertices of the path $P_1[y' \to y]$ are blue.
The probability of such an event occurring is $(1/2)^{2k}$.
Let $\greenset$ be the set of green vertices of $H$ in the random coloring.
Note that we assume that $V(P_1[x \to x']) \subseteq \greenset$.


\begin{claim}\label{claim:leftmost-green}
$P_1[x \to x']$ is the lexicographically leftmost $x$-to-$x'$ path in $H$ entirely contained in~$\greenset$.
\end{claim}

\begin{proof}
Suppose that there is an~$x$-to-$x'$ path $Q$ in $H$, entirely contained in~$\greenset$, such that $Q \leftlex P_1[x \to x']$.
Let $q \in V(H)$ be the vertex for which path $P_1[x \to q]$ is the longest common prefix of $P_1$ and $Q$.
Also, let $r$ be the first vertex on $Q[q \to x']$, not including $q$, such that $r \in V(P_1)$.
Here, $r$ is well-defined as the vertex $x'$ belongs to both $Q[q \to x']$ and $P_1$.
Since $r$ lies on $Q$, we necessarily have that $r$ is green.
Thus, $r \not\in V(P_1[y' \to y])$ since $P_1[y' \to y]$ consists only of blue vertices.
Hence, the following walk $P_1'$:
\[ P_1' = Q[x \to r] \circ P_1[r \to y] \]
is an~$(x, y)$-grounded path in $H$ of length at least
\[
 \length{P_1[y' \to y])} + 1 = k.
\]
Recall that $Q \leftlex P_1$ in $H$.
By the definition of $r$ we observe that $Q[x \to r] \lefttop P_1[x \to r]$.
Hence, $P_1' \lefttop P_1$.
We conclude that the path $P_1'$ has length at least $k$ and is left of $P_1$, which contradicts the property $(e)$ of the pair $(P_1, P_2)$.
\end{proof}

\newcommand{\Hsplit}{H^\textrm{cut}}

Note that the path $P_1[x \to x']$ is disjoint from the set of vertices on the outer face of $H$, apart from its first vertex $x$.
This allows us to define $\Hsplit$ as the plane digraph obtained from $H$ by cutting along the path $P_1[x \to x']$.
Next, let $Z \subseteq V(\Hsplit)$ be the set of new vertices introduced to $H'$ by this operation.
Let $H'$ be a~$(y,x')$-simplification of the outer face of $\Hsplit$.

\begin{claim}
$P_1[x' \to y]$ is the leftmost path in $H'$ among all the paths between $x'$ and $y$ which do not contain any vertex of $Z$.
\end{claim}

\begin{proof}
Suppose that there is an~$x'$-to-$y$ path $Q$ in $H'$ which avoids $Z$ and satisfies $Q \leftlex P_1[x' \to y]$.
Let $q \in V(H)$ be the vertex of $H$ such that $Q[x' \to q]$ is the longest common prefix of $Q$ and $P_1[x' \to y]$.
Let $r$ be the first vertex on $Q[q \to y]$ (not including $q$) such that $r \in V(P_1)$.

Analogously as in the proof of Claim~\ref{claim:leftmost-green}, we show that the walk $P_1'$ defined in $H$ as follows:
\[ P_1' = P_1[x \to x'] \circ Q[x' \to r] \circ P_1[r \to y] \]
is a~simple path of length at least $k$ that lies lexicographically left of $P_1$ in $H$.
This leads to a contradiction to property $(e)$ of $(P_1, P_2)$.
\end{proof}

The observations above lead to an algorithm for \probb.
First, consider the case where in the solution $(P_1, P_2)$, we have that $P_2 \subseteq \rightarea{P_1}$.
After the random assignment of colors to vertices of $H$, we iterate through all vertices $u \in V(H) \setminus \{x, y, t\}$ as candidates for the vertex $x'$.
Then, we find the leftmost path $S$ from $x$ to $u$ in $H$ whose all vertices are green.
Next, we construct $\Hsplit$ and $H'$, and find the leftmost path $T$ from $u$ to $y$ in $H'$ that avoids any new vertices introduced to $H'$ by the split.
Finally, we set $P_1 = S \circ T$.
Then it only remains to verify whether there exists any path $P_2$ from $y$ to $t$ in $H$ that is internally disjoint with $P_1$.
All these checks can be done by running proper depth-first searches in total linear time.

Next, we consider the case where $P_2 \subseteq \leftarea{P_1}$.
Hence, we repeat all the above steps with any leftmost paths in $H$ replaced with the corresponding rightmost paths.
To obtain the constant probability of error, we repeat the entire process above $2^{\Oh(k)}$ times.

To derandomize this algorithm, instead of assigning the colors to vertices at random, we construct and use an~$(n, 2k)$-universal set \cite{Splitters}.
In our case this universal set is a family $\mathcal{U}$ of subsets of $V(H)$ such that for any subset $A$ of $V(H)$ of size $2k$ the family $\{ A \cap U \mid U \in \mathcal{U} \}$ contains all $2^{2k}$ subsets of $A$.
We interpret each element of $\mathcal{U}$ as a single coloring, that is, a subset $A \in \mathcal{U}$ corresponds to the coloring $c$ which assigns color green to the vertices of $A$, and blue to the rest.

We know that there is such family~$\mathcal{U}$ of size $2^{\Oh(k)} \cdot \log n$ and it can be constructed in time $2^{\Oh(k)} \cdot n \log n$ \cite{Splitters}.
We construct it once in the algorithm, and instead of drawing random colorings, we iterate through all elements of the constructed universal set~$\mathcal{U}$.
By the property of $\mathcal{U}$ there is an element $U \in \mathcal{U}$ such that $V(P_1[x \to x']) \subseteq U$ and $V(P_1[y' \to y]) \cap U = \emptyset$ because $|V(P_1[x \to x']) \cup V(P_1[y' \to y])| = 2k$. 
This guarantees the correctness of our deterministic algorithm.
\end{proof}

\paragraph*{Acknowledgement.} The authors thank Olek Łukasiewicz for pointing us to the work of Wu and Wang~\cite{WuW15}.

\bibliographystyle{plain}
\bibliography{references}
\end{document}